\definecolor{darkblue}{rgb}{0,0,0.5}
\newtheorem{theorem}{Theorem}
\newtheorem{lemma}[theorem]{Lemma}
\newenvironment{proof}[1][Proof]{\noindent\textbf{#1.} }{\ \rule{0.5em}{0.5em}}
\def\be{\begin{equation}}
\def\ee{\end{equation}}
\def\ba{\begin{eqnarray}}
\def\ea{\end{eqnarray}}
\def\bal{\begin{equation}\begin{aligned}}
\def\eal{\end{aligned}\end{equation}}
\def\d{^\dagger}
\def\bp{\begin{pmatrix}}
\def\ep{\end{pmatrix}}
\newcommand{\calC}{{\cal C}}
\newcommand{\calD}{{\cal D}}
\newcommand{\calE}{{\cal E}}
\newcommand{\calF}{{\cal F}}
\newcommand{\calI}{{\cal I}}
\newcommand{\calL}{{\cal L}}
\newcommand{\calM}{{\cal M}}
\newcommand{\calN}{{\cal N}} 
\newcommand{\calT}{{\cal T}}
\newcommand{\calH}{{\cal H}}
\newcommand{\calK}{{\cal K}}
\newcommand{\calS}{{\cal S}}
\newcommand{\calU}{{\cal U}}
\newcommand{\p}{^\prime}
\newcommand{\1}{^{(1)}}
\newcommand{\state}[1]{\ketbra{#1}{#1}}
\newcommand{\QZ}[1]{{{\textcolor{black}{#1}}}}
\begin{document}
\title{Entanglement-assisted capacity regions and protocol designs for quantum multiple-access channels}
\author{Haowei Shi}
\affiliation{
James C. Wyant College of Optical Sciences, University of Arizona, Tucson, Arizona 85721, USA
}

\author{Min-Hsiu Hsieh}
\affiliation{
Hon Hai Research Institute, Taipei, 114, Taiwan
}

\author{Saikat Guha}
\affiliation{
James C. Wyant College of Optical Sciences, University of Arizona, Tucson, Arizona 85721, USA
}

\author{Zheshen Zhang}
\affiliation{
James C. Wyant College of Optical Sciences, University of Arizona, Tucson, Arizona 85721, USA
}
\affiliation{
Department of Materials Science and Engineering, University of Arizona, Tucson, Arizona 85721, USA
}

\author{Quntao Zhuang}
\email{zhuangquntao@email.arizona.edu}
\affiliation{
James C. Wyant College of Optical Sciences, University of Arizona, Tucson, Arizona 85721, USA
}
\affiliation{
Department of Electrical and Computer Engineering, University of Arizona, Tucson, Arizona 85721, USA
}

\begin{abstract}
We solve the entanglement-assisted (EA) classical capacity region of quantum multiple-access channels with an arbitrary number of senders. As an example, we consider the bosonic thermal-loss multiple-access channel and solve the one-shot capacity region enabled by an entanglement source composed of sender-receiver pairwise two-mode squeezed vacuum states. The EA capacity region is strictly larger than the capacity region without entanglement-assistance. With two-mode squeezed vacuum states as the source and phase modulation as the encoding, we also design practical receiver protocols to realize the entanglement advantages. Four practical receiver designs, based on optical parametric amplifiers, are given and analyzed. In the parameter region of a large noise background, the receivers can enable a simultaneous rate advantage of $82.0\%$ for each sender. Due to teleportation and superdense coding, our results for EA classical communication can be directly extended to EA quantum communication at half of the rates. Our work provides a unique and practical network communication scenario where entanglement can be beneficial.

\end{abstract}

\maketitle

\section{Introduction}
Communication channels model physical media for information transmission. In the case of a single-sender single-receiver channel, the Shannon capacity theorem~\cite{Shannon_1948,cover1999elements} concludes that a channel is essentially characterized by a single quantity---the channel capacity. As physical media obey quantum physics, the channel model eventually needs to incorporate quantum effects during the transmission, which has re-shaped our understanding of communication. To begin with, the Shannon capacity has been generalized to the Holevo-Schumacher-Westmoreland (HSW) classical capacity~\cite{hausladen1996classical,schumacher1997sending,holevo1998capacity}. Quantum effects such as entanglement have also enabled non-classical phenomena in communication, such as superadditivity~\cite{hastings2009superadditivity,smith2008quantum,zhu2017,zhu2018superadditivity,leditzky2018,fanizza2020quantum} and capacity-boost from entanglement-assistance (EA)~\cite{bennett1992,bennett1999entanglement,bennett2002entanglement,holevo02,shor2004classical,hsieh2008entanglement,zhuang2017additive,wilde2012quantum,wilde2012information,zhuang2020entanglement}. Moreover, reliable transmission of quantum information is possible, established by the Lloyd-Shor-Devetak quantum capacity theorem~\cite{quantum_capacity_Lloyd,quantum_capacity_Shor,quantum_capacity_Devetak}. Combining different types of information transmission, Refs.~\onlinecite{Wilde2012,Wilde2012_2} provide a capacity formula for the simultaneous trade-off of classical information (bits), quantum information (qubits) and quantum entanglement (ebits). 

Despite their exact evaluation being prevented by the superadditivity dilemma, capacities of single-sender single-receiver quantum channels are well-understood. In particular, the benefits of entanglement in boosting the classical communication rates have been known since the pioneering theory works~\cite{bennett1992,bennett2002entanglement,bennett1999entanglement,holevo02,hsieh2008entanglement} and recently experimentally demonstrated~\cite{hao2021entanglement} in a thermal-loss bosonic communication channel. The two-mode-squeezed-vacuum (TMSV) state is utilized as the entanglement source and functional quantum receivers are demonstrated, thanks to the practical protocol design in Ref.~\cite{shi2020practical}. Further development of receiver designs~\cite{guha2020infinite} and the application to covert communication~\cite{gagatsos2020covert} have also been considered.

However, supported by the Internet, real-life communication scenarios, such as online lectures and online conferences, often involve multiple senders and/or receivers. As a common paradigm being studied in the literature~\cite{laurenza2017general,notzel2020entanglement,leditzky2020playing,yard2008capacity,qi2018applications}, the multiple-access channel (MAC) concerns multiple senders and a single receiver. Communication over a MAC is no longer characterized by a single rate, but a rate region with a trade-off between multiple senders. With the development of a quantum network~\cite{kimble2008quantum,biamonte2019complex,wehner2018quantum,kozlowski2019towards,zhang2021entanglement}, quantum effects have also become relevant in such a communication scenario. In this regard, the classical capacity region of a quantum MAC was solved by Winter~\cite{winter2001capacity}, while the entanglement-assisted (EA) classical communication capacity region in the special case of a two-sender MAC was solved in Ref.~\cite{hsieh2008entanglement}. Although superadditivity in the capacity region has also been found in a MAC~\cite{czekaj2009purely,czekaj2011} and EA advantage in a classical MAC can be shown~\cite{leditzky2020playing}, it is unclear how much advantage entanglement can provide for a quantum MAC in a direct communication scenario.

In this work, we present a thorough study of EA classical communication over a quantum MAC with an arbitrary number of senders. On the fundamental information-theoretic side, we prove the general EA classical capacity theorem for an $s$-sender ($s\ge 2$) MAC, which has been conjectured in Ref.~\cite{hsieh2008entanglement} and yet not proven for the past decade. Next, we proceed to evaluate the EA rate region of the bosonic thermal-loss MAC, which models an optical or microwave communication scenario (see Fig.~\ref{fig:concept}), and find rigorous advantages from entanglement.  Finally, on the application layer, we propose practical protocols to realize the EA advantage in a bosonic thermal-loss MAC, and provide a variety of transmitter and receiver designs. Due to teleportation~\cite{bennett1993teleporting} and superdense coding~\cite{bennett1992}, our results for EA classical communication can be directly extended to EA quantum communication at half of the rates. As bosonic thermal-loss MACs model various real-world communication networks, our EA communication scenario is widely applicable to radio-frequency, deep-space~\cite{banaszek2019approaching}, and wireless communication scenarios~\cite{win2000ultra}.

\section{Results}

\begin{figure}[tbp]
    \centering
    \includegraphics[width=0.45\textwidth]{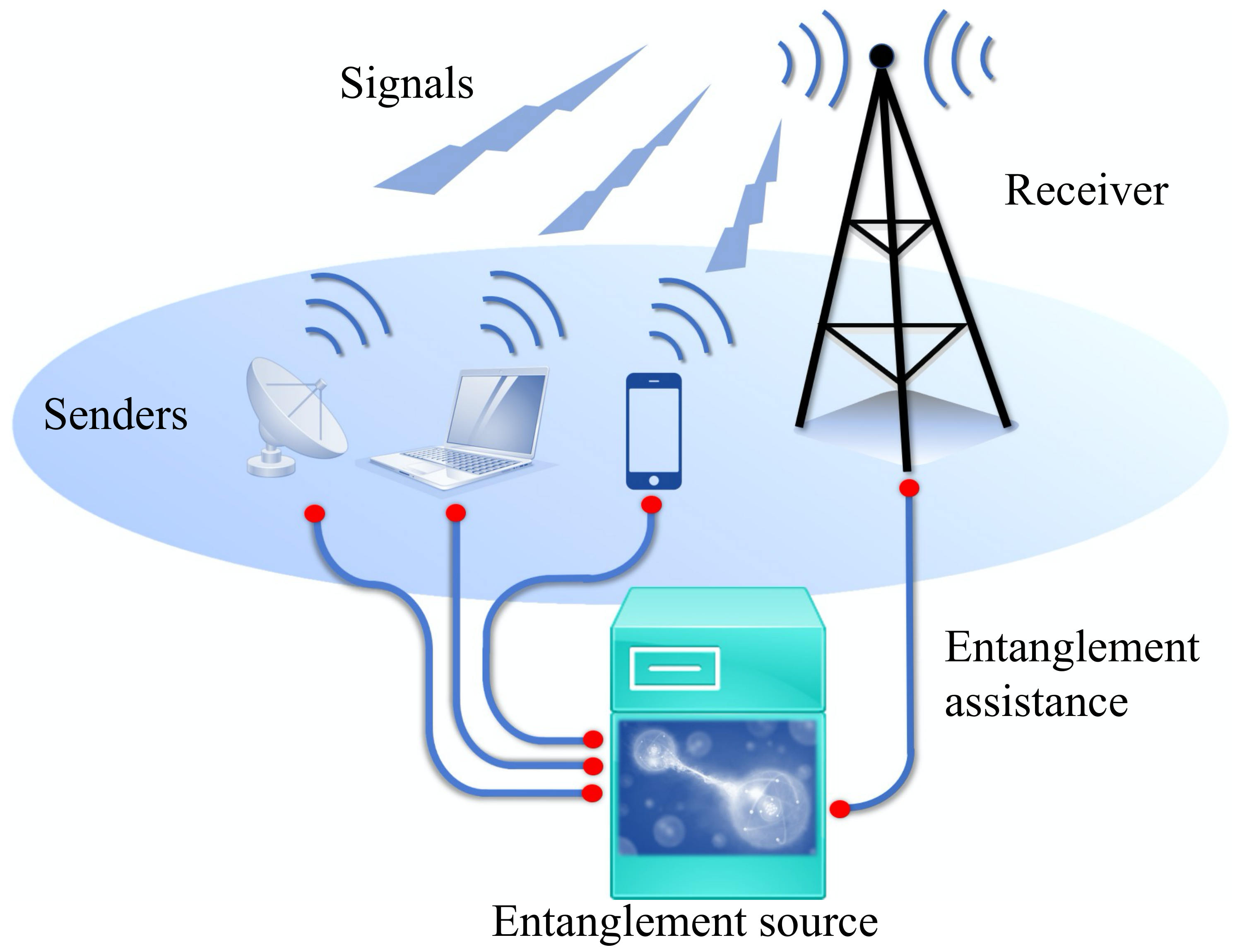}
    \caption{{\bf Conceptual schematic of EA classical communication over a MAC.} The entanglement source distributes entangled pairs to each sender and the receiver, potentially via a quantum network. The senders encode their own message on their share and send the signals to the receiver. The receiver decodes the messages of all senders by jointly measuring the received signal and the entanglement assistance locally stored.
    \label{fig:concept}
    }
\end{figure}

In a MAC, multiple senders individually communicate with a single receiver. As shown in Fig.~\ref{fig:concept}, besides a transmitter that sends encoded messages, each sender has access to entanglement pre-shared with the receiver, potentially through a ground-satellite and/or fiber-based quantum network. The receiver decodes all messages from the senders via a joint measurement on all received signals and the stored EA. Our first main result is an EA capacity theorem which quantifies the trade-off between the ultimate communication rates of different senders. The capacity formula has a form of conditional quantum mutual information, analogous to the classical formula~\cite{cover1999elements}. We then give an explicit example of a bosonic thermal-loss MAC and evaluate its rate-region with the common TMSV entanglement source. Comparing it with the case without EA~\cite{yen2005multiple}, we find great advantages enabled by entanglement; Moreover, when the sources of all senders have equal and low brightness, we numerically find that the TMSV source is optimal at a corner rate point. As a benchmark, we derive bounds on the capacity region and design practical protocols, based only on off-the-shelf quantum optical elements, which can achieve quantum advantages from entanglement in the near-term.


\subsection{ EA classical capacity theorem for MAC}

\begin{figure}[tbp]
    \centering
    \includegraphics[width=0.475\textwidth]{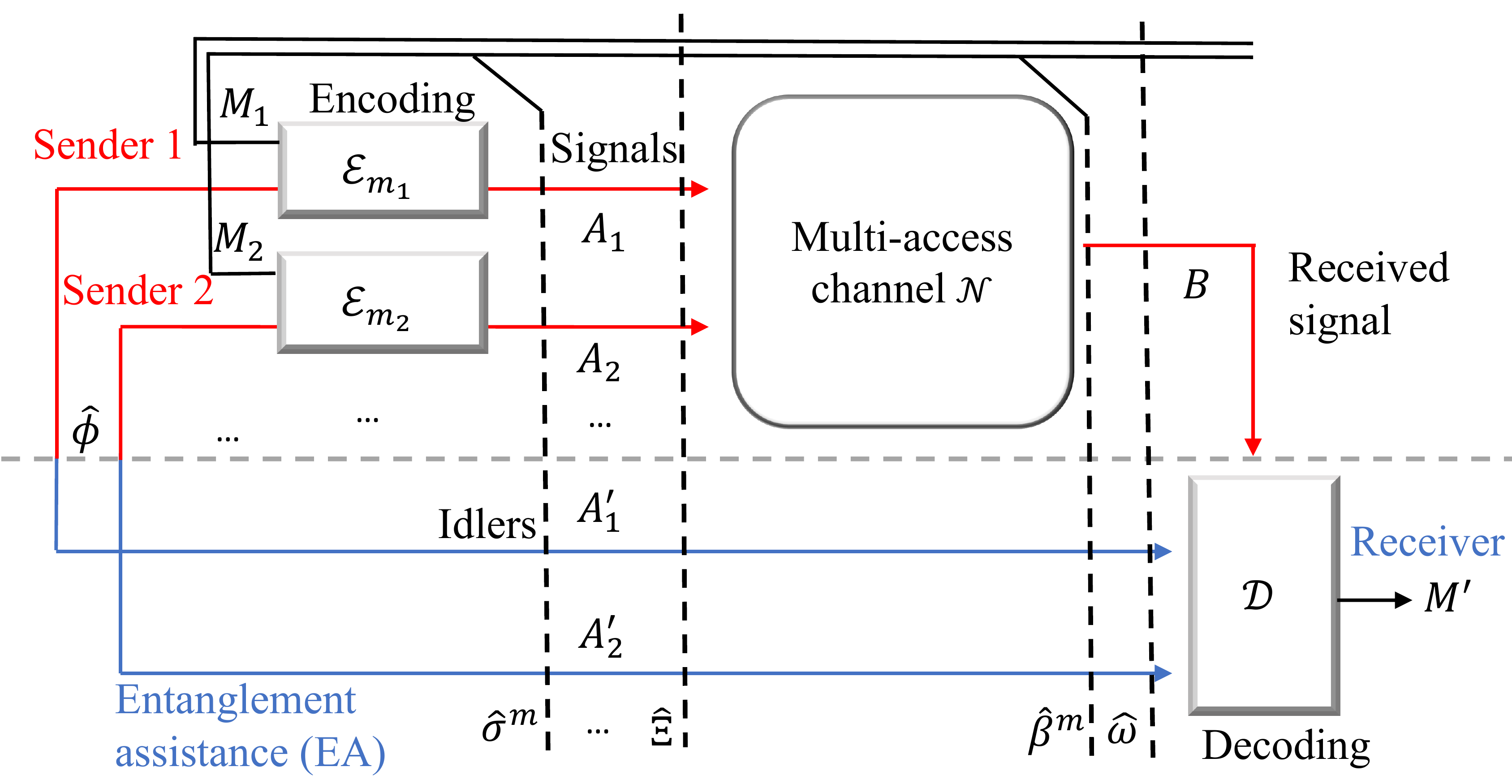}
    \caption{ {\bf Schematic of a general EA-MAC communication protocol.} The EA sources $\hat{\phi}$ of the $s$ senders are in a product state of Eq.~\eqref{EA_state}. The $s$ senders apply independent encoding modeled by quantum operations, i.e., sender $k$ applies $\calE_{m_k}$ on the signal state given the message $m_k$. Denoting the entire message as $m=m_1\cdots m_s$, the encoded signal-idler is then in a state $\hat{\sigma}^m$. The senders' encoded quantum systems $A=A_1\cdots A_s$ are sent through the MAC $\calN$, leading to the output system ${B}$. The receiver applies the quantum operation $\calD$ to decode the information from the joint state $\hat{\beta}^m$ of the output system ${B}$ and the pre-shared reference systems $A'=A_1'\cdots A_s'$. We define $M_k$ as the codeword space of each message $m_k$, $M$ as the overall codeword space of message $m$, and $M^\prime$ as the decoded codeword space. To facilitate the analysis, we denote the overall state $\hat{\Xi}$ (Eq.~\eqref{eq:Xi}) over systems $MAA^\prime$ right before the channel and the overall state $\hat{\omega}$  (Eq.~\eqref{eq:omega}) over systems $MBA^\prime$ right before the decoding.
    \label{fig:EAMACschematic_main}
    }
\end{figure}

\subsubsection{Multiple-access channels}
As depicted in Fig.~\ref{fig:EAMACschematic_main}, consider a MAC with $s$ senders, each sending a message $m_k$ ($1\le k \le s$) sampled from a message space $M_k$, therefore the overall message $m=m_1\cdots m_s$ is sampled from the message space $M=\otimes_{k=1}^s M_k$. To send each message $m_k$, the $k$th sender performs a quantum operation $\calE_{m_k}$ to produce a signal quantum system $A_k$. Following Ref.~\cite{hsieh2008entanglement}, we introduce EA in the above communication scenario---namely the receiver has a reference system $A_k^\prime$ (idler) pre-shared as the EA with the $k$th sender.

We consider the entanglement to be pairwise between each sender and the receiver such that the overall quantum state
\be 
\hat{\phi}_{AA^\prime}=\otimes_{k=1}^s \hat{\phi}_{A_k A_k^\prime}
\label{EA_state}
\ee 
is in a product form, where we have denoted $A=A_1\cdots A_s$ and $A^\prime=A_1^\prime\cdots A_s^\prime$ as the overall systems.

After the encoding, the composite system $A$ containing all of the quantum systems $\{A_k\}_{k=1}^s$ is input to the MAC $\calN_{A\to B}$, which outputs the quantum system $B$ for the receiver to decode the messages jointly with the EA $A^\prime$. 
For convenience, we define a quantum state after the channel but without the encoding,
$
\hat{\rho}_{BA^\prime}= \left[\calN_{A\to B}  \otimes \calI\right] \left(\hat{\phi}_{AA^\prime}\right),
$
where $\calI$ is the identity channel modeling the ideal storage of the idler system. The formal analyses of the quantum-state evolution can be found in Appendix~\ref{app:method}.

The performance metric of the above communication scenario is described by a vector of rates $(R_1,\cdots,R_s)$, where $R_k$ is the reliable communication rate between the $k$th sender and the receiver (see Appendix~\ref{app:method}, Section II of Ref.~\cite{winter2001capacity}, and Subsection III.A of Ref.~\cite{hsieh2008entanglement} for the formal definitions). These rates in general have non-trivial trade-offs with each other. In the case without EA, the capacity region is well-established by the pioneering work of Winter~\cite{winter2001capacity} (see Appendix~\ref{supp:note2}). 

To describe the rate region of the $s$-sender MAC, we will frequently divide the senders into two blocks, the block of interest indexed by a sequence $J$ and the complementary block $J^c$. For example, when $s=2$, we have four possible block divisions: $\{J=1, J^c=2\}$, $\{J=2,J^c=1\}$, and and two trivial cases $\{J=12, J^c=\varnothing\}$, $\{J=\varnothing, J^c=12\}$. 
Any $s$-fold quantity can be written as a composition of the two blocks, e.g., message space $M=M[J]M[J^c]$, with $M[J]=\otimes_{i\in J} M_i$, $M[J^c]=\otimes_{i\in J^c}M_i$; similarly the message as $m=m[J]m[J^c]$.

\subsubsection{Capacity theorem} To present our EA-MAC capacity theorem for the scenario in Fig.~\ref{fig:EAMACschematic_main}, we introduce some entropic quantities. For a quantum system $XYZ$ in a state $\hat{\alpha}$, we define the quantum mutual information between $X$ and $Y$ as 
\be 
I(X:Y)_{\hat{\alpha}}=S(X)_{\hat{\alpha}}+S(Y)_{\hat{\alpha}}-S(XY)_{\hat{\alpha}},
\nonumber
\ee 
where $S(X)_{\hat{\alpha}}=S(\hat{\alpha}_X)=-\tr \left(\hat{\alpha}_X\log_2 \hat{\alpha}_X\right)$ is the von Neumann entropy. Similarly, the quantum conditional mutual information between $X$ and $Z$ conditioned on $Y$
\be 
I(X;Z|Y)_{\hat{\alpha}}=S(XY)_{\hat{\alpha}}+S(YZ)_{\hat{\alpha}}-S(XYZ)_{\hat{\alpha}}-S(Y)_{\hat{\alpha}}.
\nonumber
\ee 
With the entropic quantities in hand, we can present our main theorem below (see Appendix~\ref{supp:note3} for a proof). 
\begin{theorem}[EA-MAC capacity]
\label{theorem: EA_MAC_main}
The entanglement-assisted classical communication capacity region over an $s$-sender MAC $\calN$ is given by the regularized union 
\be
\calC_{\rm E}(\calN)=\overline{\bigcup_{\ell=1}^\infty \frac{1}{\ell}  \calC_{\rm E}^{(1)}(\calN^{\otimes \ell})}
\label{eq:CE_regularization}
\ee 
where the ``one-shot'' capacity region $\calC_{\rm E}^{(1)}(\calN)$ is the convex hull of the union of ``one-shot, one-encoding'' regions
\be 
\calC_{\rm E}^{(1)}(\calN)={\rm Conv}\left[\bigcup_{\hat{\phi}} \tilde \calC_{\rm E}(\calN,\hat{\phi} )\right].
\label{eq:CE_union_state}
\ee 
The ``one-shot, one-encoding''  rate region $\tilde \calC_{\rm E}(\calN,\hat{\phi} )$ for the 2s-partite pure product state $\hat{\phi}_{AA^\prime}=\otimes_{k=1}^s\hat{\phi}_{A_kA_k^\prime}$ over $AA^\prime$, is the set of rates $(R_1,\cdots,R_s)$ satisfying the following $2^s$ inequalities
\bal
\sum_{k\in J}R_k&\leq I(A'[J];B|A'[J^c])_{\hat{\rho}}, \forall J,
\label{eq:CeMAC_qinfo}
\,\eal 
where the conditional quantum mutual information is evaluated over the output state $\hat{\rho}_{BA^\prime}=\calN_{A\to B}\otimes\calI(\hat{\phi}_{AA^\prime})$. 
\end{theorem}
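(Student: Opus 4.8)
The plan is to establish the theorem in two halves --- achievability (the rates in \eqref{eq:CeMAC_qinfo} are attainable) and the converse (no larger rates are possible) --- for the ``one-shot, one-encoding'' region $\tilde\calC_{\rm E}(\calN,\hat\phi)$ attached to a fixed pure product state, after which the union over $\hat\phi$, the convex hull in \eqref{eq:CE_union_state}, and the blocking/regularization in \eqref{eq:CE_regularization} follow from standard time-sharing, state-union, and super-additivity arguments together with the closure built into the definitions. First I would record the structural facts that drive everything: because the idlers are in a product state, $I(A'[J];B|A'[J^c])_{\hat\rho}=I(A'[J]:BA'[J^c])_{\hat\rho}$, and $J\mapsto I(A'[J];B|A'[J^c])_{\hat\rho}$ is submodular by strong subadditivity. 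Hence the $2^s$ inequalities cut out a (contra-)polymatroid whose dominant face has vertices indexed by permutations $\pi$ of the senders, with $R_{\pi_j}=I(A'_{\pi_j};B\,|\,A'_{\pi_1}\cdots A'_{\pi_{j-1}})_{\hat\rho}$; it therefore suffices to achieve each such vertex.

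For achievability at the vertex associated with a permutation $\pi$ I would use successive decoding. Sender $k$ draws $2^{nR_k}$ codewords from a Heisenberg--Weyl (displacement, in the bosonic example) ensemble acting on the typical subspace of $A_k^{\otimes n}$; the property to exploit is that the codebook average reproduces $\hat\phi_{A_k}^{\otimes n}$ on the signal and decouples it from the idler, so that any sender treated as interference contributes exactly the reduced state that appears in $\hat\rho$. Decoding proceeds in the order $\pi$: at step $j$ the receiver decodes $m_{\pi_j}$ from $B^n$, from the idler $A'^n_{\pi_j}$, and from the already-decoded registers, treating $m_{\pi_{j+1}},\dots,m_{\pi_s}$ as noise. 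Each step is an instance of the single-user entanglement-assisted capacity theorem \cite{bennett2002entanglement,holevo02} (equivalently, a packing-lemma / Hayashi--Nagaoka argument) applied to the resulting effective channel, whose entanglement-assisted mutual information equals $I(A'_{\pi_j};B\,|\,A'_{\pi_1}\cdots A'_{\pi_{j-1}})_{\hat\rho}$ by the averaging property and the chain rule; a union bound over the $s$ steps controls the total error, and the case $s=2$ of \cite{hsieh2008entanglement} is recovered. An alternative single-step route is a genuine simultaneous decoder analyzed with Sen's non-commutative union bound.

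For the converse, take any reliable $(n,\epsilon)$ code. For each subset $J$, Fano's inequality together with the Alicki--Fannes--Winter continuity bound give $\sum_{k\in J}R_k\le \tfrac1n I(M[J]:B^nA'^n)_{\hat\omega}+\delta_n$ with $\delta_n\to0$. Since $M[J]$ is independent of $M[J^c]$ and of $A'^n[J^c]$ (the encodings are trace-preserving and act only on the signals), this equals $\tfrac1n I(M[J];B^nA'^n[J]\,|\,A'^n[J^c])_{\hat\omega}$; data processing through the encoding maps and the channel, followed by the optimization over input states, bounds it by $\tfrac1n I(A'^n[J];B^n\,|\,A'^n[J^c])$ evaluated on an $n$-copy product input state $\calN^{\otimes n}\otimes\calI(\hat\phi^{(n)})$, i.e.\ a point of $\tfrac1n\tilde\calC_{\rm E}(\calN^{\otimes n},\hat\phi^{(n)})\subseteq\calC_{\rm E}(\calN)$. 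I expect the main obstacle to sit on the achievability side: the quantum simultaneous-decoding problem forbids a naive decoder for all $2^s-1$ error events at once, so the argument must route through the polymatroid corner points, and one must verify carefully that at each successive step the single-user theorem applies with exactly the effective channel whose mutual information reproduces the conditional term in \eqref{eq:CeMAC_qinfo} --- in particular that the interfering senders' codebooks average to the signal marginals of $\hat\rho$ and decouple from their idlers. A secondary technical point is the clean absorption, in the converse, of the senders' local encodings into the encoding-free output state $\hat\rho$.
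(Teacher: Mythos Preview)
Your proposal is correct and would work, but the achievability argument differs from the paper's. Rather than successive decoding with per-step invocation of the single-user EA theorem at the polymatroid corner points, the paper treats the EA-MAC as a constrained unassisted MAC $\calM=\calN\otimes\calI$ with enlarged output $BA'$ and invokes Winter's MAC coding theorem \cite{winter2001capacity} (restated as Lemma~\ref{lemma:mac_holevo} and Theorem~\ref{theorem:mac_hsw}) directly on it. The same Heisenberg--Weyl (generalized Pauli) encoding you propose is used, but only to \emph{compute} the resulting Holevo-type quantity $I(M[J];BA'\,|\,M[J^c])_{\hat\omega}$: for maximally entangled $\hat\phi$ the ricochet property $U\otimes I\ket{\phi}=I\otimes U^\star\ket{\phi}$ collapses this to $I(A'[J];B\,|\,A'[J^c])_{\hat\rho}$ exactly, and general $\hat\phi$ is then handled by the typical-projection limiting argument of \cite{bennett2002entanglement}. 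Your route is more operational and makes the decoder structure explicit, at the cost of managing the post-measurement state across $s$ stages (gentle measurement, or undoing decoded encodings on the idler side); the paper's route outsources the entire multi-user decoding problem to Winter's theorem and so never confronts the simultaneous-decoding obstacle you flag. On the converse side both arguments are Fano plus data processing; the paper's explicit device for what you call the ``clean absorption of the encodings'' is to enlarge the idler to an $A'_{\rm ex}$ that absorbs the classical register $M$ and a purifying system $R$, after which the bound $I(M[J];BA'\,|\,M[J^c])_{\hat\omega}\le I(A'_{\rm ex}[J];B\,|\,A'_{\rm ex}[J^c])$ follows from strong subadditivity (discarding systems never increases mutual information).
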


Here we make some remarks about Theorem~\ref{theorem: EA_MAC_main}: First, if we only focus on the regularized capacity region $\calC_{\rm E}(\calN)$, then the convex hull in Eq.~\eqref{eq:CE_union_state} is not necessary, as one can simply include the time-sharing over different inputs among the infinite number of channel uses; However, if one wants to formulate the ``one-shot'' capacity region $\calC_{\rm E}^{(1)}(\calN)$, then the convex hull is necessary to include potential time-sharing between any codes. 
Second, the capacity formula in Ref.~\cite{hsieh2008entanglement} can be considered as a special case of our theorem, as the regularized case does not need the convex hull in Eq.~\eqref{eq:CE_union_state}; indeed, at the end of Ref.~\cite{hsieh2008entanglement} our theorem is stated as a conjecture.

\subsection{ EA capacity region for a bosonic MAC}

\begin{figure}[tbp]
    \centering
    \includegraphics[width=0.35\textwidth]{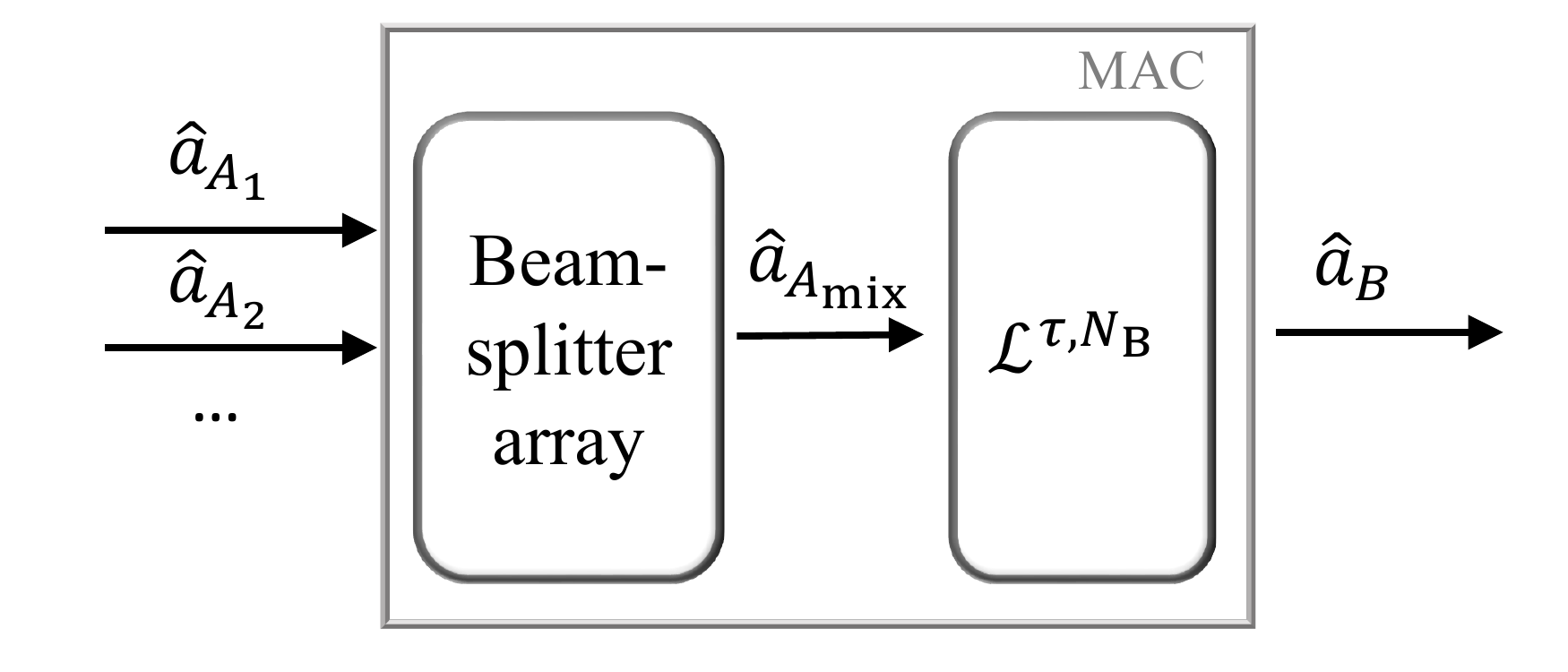}
    \caption{{\bf Schematic of the bosonic thermal-loss MAC.} The beam-splitter array models a linear scattering medium. The thermal-loss channel models the noisy transmission.
    }
    \label{fig:MACbosonic}
\end{figure}

\subsubsection{Bosonic thermal-loss multiple-access channels}
In an optical or microwave communication scenario, the relevant MAC is a bosonic thermal-loss MAC depicted in Fig.~\ref{fig:MACbosonic}. Upon the input modes $\hat{a}_{A_1}\cdots \hat{a}_{A_s}$ from the $s$ senders, the MAC $\calN$ first combines the modes through a beam-splitter array to produce a mixture mode 
\be
\hat{a}_{A_{\rm mix}}=\sum_{k=1}^s \sqrt{\eta_k}\hat{a}_{A_k},
\label{Amix}
\ee 
while all other ports of the beam-splitter array are discarded, here the weights $\{\eta_k\}$ are non-negative and normalized. Then the mixture mode goes through a bosonic thermal-loss channel $\calL^{\tau,N_{\rm B}}$ described by the operator transform
\be 
\hat{a}_B=\sqrt{\tau} \hat{a}_{A_{\rm mix}}+\sqrt{1-\tau} \hat{a}_E,
\label{thermal_loss}
\ee 
where $\hat{a}_E$ denotes the environment mode in a thermal state with a mean photon number $\expval{\hat{a}_E^\dagger \hat{a}_E}=N_{\rm B}/(1-\tau)$. This convention of fixing the mean photon number $N_{\rm B}$ of the thermal noise mixed into the output mode $\hat{a}_B$ is widely used, e.g., in quantum illumination~\cite{Tan2008,zhuang2017}.

In a bosonic MAC, the Hilbert space of the quantum systems is infinite-dimensional---an arbitrary number of photons can occupy a single mode due to the bosonic nature of light. To model a realistic communication scenario, we will consider an energy constraint on the mean photon number (brightness) of the signals modes
\be 
\expval{\hat{a}_{A_k}^\dagger \hat{a}_{A_k}}=N_{{\rm S},k}, 1\le k \le s,
\ee 
which is commonly adopted in bosonic communication~\cite{giovannetti2014ultimate,yen2005multiple,shi2020practical}. Note that in general the energy of different senders can be different.

Without EA, the capacity region of the above bosonic MAC has been considered in Ref.~\cite{yen2005multiple} for the two-user case. However, the generalization of the coherent-state rate region therein to the $s$-sender case is straightforward, leading to a rate region specified by the following $2^s$ inequalities,
\be
\sum_{i\in J} R_i\le C_{\rm coh}^J\equiv g\left(\sum_{i\in J} \tau\eta_i N_{{\rm S},i}+N_{\rm B}\right)-g\left(N_{\rm B}\right),
\label{C_J_region}
\ee
where $g(x)=(x+1)\log_2(x+1)-x\log_2(x)$ and $J$ can be chosen arbitrarily. Moreover, a squeezing-based encoding scheme is shown to be advantageous over the coherent-state encoding; however, regardless of the encoding, the rate region is always bounded by the following set of outer bounds
\be  
R_k\le g\left(\tau N_{{\rm S},k}+N_{\rm B}\right)-g\left(N_{\rm B}\right), \, 1\le k\le s,
\label{C_outer_bound}
\ee 
which are derived by assuming a super receiver that can reverse the beamsplitter array in the bosonic thermal-loss MAC. A second outer bound can be obtained from energetic considerations, which leads to the same form of Ineq.~\eqref{C_J_region} with $J$ being all users. As these outer bounds represent the upper limit of all encodings without EA, an EA rate region outside the rate region specified by the above outer bounds will demonstrate a strict advantage enabled by entanglement.

\subsubsection{EA outer bounds}As the exact evaluation of the EA capacity region for the bosonic MAC is challenging, we first focus on outer bounds to obtain some insights.
Similar to the case without EA, via reducing to the single-sender EA classical capacity, one can obtain outer bounds for the EA-MAC classical capacity region (See Appendix~\ref{app:method} for a proof).
Explicitly, we have
\begin{subequations}
\label{eq:EA_outer_bound}
\begin{align}
&R_k\leq C_{\rm E}\left( N_{{\rm S},k}; \calL^{\tau, N_{\rm B}}\right), 1\le k \le s,
\label{eq:EA_outer_bound_individual}
\\
&\sum_{k=1}^s R_k\leq C_{\rm E}\left(\sum_{k=1}^s \eta_k N_{{\rm S},k}; \calL^{\tau, N_{\rm B}}\right)
\label{eq:EA_outer_bound_sum}
\,,
\end{align}
\end{subequations}
where the explicit formula of the EA capacity $C_{\rm E}\left(N_{\rm S}; \calL^{\tau, N_{\rm B}}\right)$ over a bosonic thermal-loss channel $\calL^{\tau, N_{\rm B}}$, with the energy constraint $N_{\rm S}$, can be found in Eq.~\eqref{CE_full} of Appendix~\ref{app:method}. These outer bounds provide the upper limit of EA classical communication rates, and apply to arbitrary forms of entanglement source $\hat{\phi}$ and encoding $\{\calE_m\}$.

\subsubsection{Two-mode squeezed vacuum rate region} 
To obtain an explicit example of bosonic EA-MAC capacity region, we consider the entanglement source in Eq.~\eqref{EA_state} as a product of TMSV pairs, each with the wave-function 
\be 
\hat{\phi}_{A_k A_k^\prime}^{\rm TMSV}=\sum_{n_k=0}^\infty \sqrt{\frac{N_{{\rm S},k}^{n_k}}{(N_{{\rm S},k}+1)^{n_k+1}}} \ket{n_k}_{A_k}\ket{n_k}_{A_k^\prime},
\label{eq:state_TMSV}
\ee 
for $1\le k \le s$, where $\ket{n}$ is the number state defined by $\hat{a}^\dagger \hat{a}\ket{n}=n\ket{n}$.
In Ref.~\cite{shi2020practical}, it has been shown that the TMSV state is optimal for single-sender single-receiver EA classical communication, therefore we expect the TMSV source to be good in the MAC case. Although, due to the complexity from the plurality of the senders, the exact union over the states in Eq.~\eqref{eq:CE_union_state} for the EA-MAC classical capacity region is challenging to solve.

We evaluate the ``one-shot, one-encoding'' rate region $\tilde \calC_{\rm E}(\calN,\hat{\phi}^{\rm TMSV} )$ in Ineqs.~\eqref{eq:CeMAC_qinfo} for the TMSV source in Eq.~\eqref{eq:state_TMSV}. Although the evaluation of each Ineq.~\eqref{eq:CeMAC_qinfo} is efficient thanks to the Gaussian nature of the state, the number of such inequalities $2^s$ is exponential and therefore resource-consuming in practice. To showcase the capacity region, we choose $s=2,3$, which enable direct visualization as the rate region is two or three dimensional. In comparison, we also compute the classical coherent-state rate region in Ineq.~\eqref{C_J_region} and the classical outer bound, specified jointly by Ineq.~\eqref{C_outer_bound} and Ineq.~\eqref{C_J_region} with $J$ being all senders. Moreover, we can also compare $\tilde \calC_{\rm E}(\calN,\hat{\phi}^{\rm TMSV} )$ with the EA outer bound in Ineqs.~\eqref{eq:EA_outer_bound}.

Three representative setups of parameters are chosen as examples. To begin with, we consider an intermediate channel noise $N_{\rm B}=20$, identical to the case of microwave quantum illumination~\cite{Tan2008,guha2009}; Furthermore, a noisy channel with sufficiently large noise $N_{\rm B}=10^4$ is noteworthy as it provides a saturated EA advantage~\cite{shi2020practical}; Finally, the long wavelength infrared domain with relatively small noise $N_{\rm B}=0.1$ is a relatively uncharted territory for EA communication, nevertheless also relevant for practical application.

\begin{figure}[tbp]
    \centering
    \hspace{-.5cm}\includegraphics[width=0.425\textwidth]{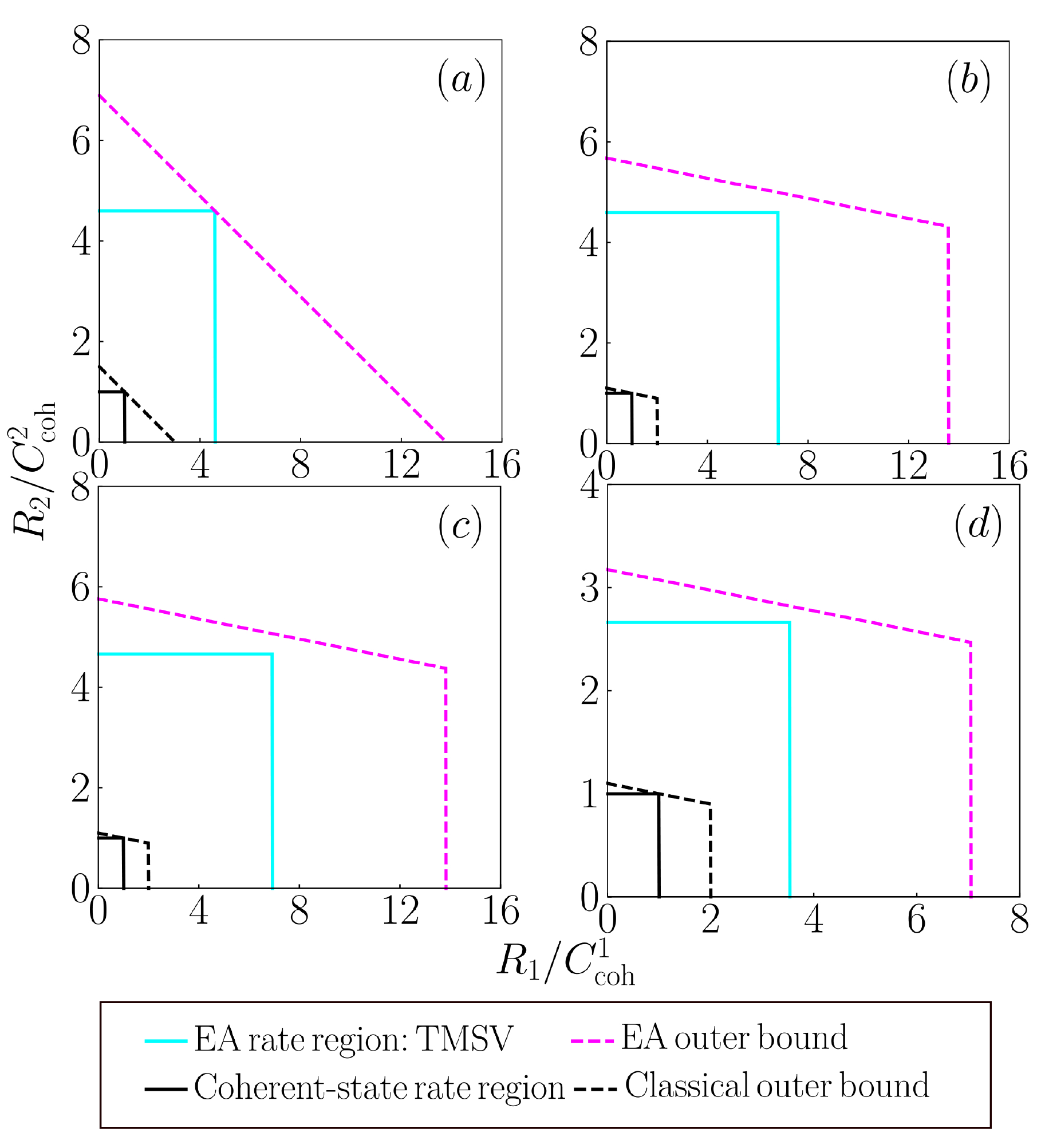}
    \caption{{\bf The symmetric two-sender rate region.} Rates are normalized by the coherent-state bound $C_{\rm coh}^1$, $C_{\rm coh}^2$ defined in Ineq.~\eqref{C_J_region}, evaluated in the scenario of (a) microwave domain, $\tau=0.01,N_{\rm B}=20$, $N_{{\rm S},1}=N_{{\rm S},2}=0.01$, $\eta_1=1/3$, $\eta_2=2/3$. (b) microwave domain, $\tau=0.01,N_{\rm B}=20$, $\eta_1=\eta_2=1/2$, $N_{{\rm S},1}=0.001, N_{{\rm S},2}=0.01$. (c) a noisy channel, $\tau=10^{-3},N_{\rm B}=10^4$, $\eta_1=\eta_2=1/2$, $N_{{\rm S},1}=0.001, N_{{\rm S},2}=0.01$.
    (d) long wavelength infrared domain $\eta_1=\eta_2=1/2,\tau=0.001,N_{\rm B}=0.1$, $N_{{\rm S},1}=0.001, N_{{\rm S},2}=0.01$.
    The EA rate region in Ineq.~\eqref{eq:CeMAC_qinfo} (cyan solid), evaluated on TMSV states, is bounded by the EA outer bound (magenta dashed) in Ineqs.~\eqref{eq:EA_outer_bound}; while the coherent-state rate region (black solid) given by Ineq.~\eqref{C_J_region} is bounded by the classical outer bound (black dashed).
    \label{fig:rateregion_Ce}
    }
\end{figure}

We begin with a two-sender case ($s=2$).
As shown in Fig.~\ref{fig:rateregion_Ce}, in all the parameter settings being considered, we can see strict advantages of the EA capacity region (cyan solid) over the classical outer bound (black dashed), which is higher than the coherent-state rate region (black solid). We find that the advantage is larger when the noise $N_{\rm B}$ is larger, comparing subplots (c) and (d). In particular, this advantage also holds when $N_{\rm S}\ll N_{\rm B} \ll1$, which can happen in the long wavelength infrared domain, as shown in subplot (d).

Comparing with the EA outer bound (magenta dashed), we see that in Fig.~\ref{fig:rateregion_Ce}(a) the TMSV rate region (cyan solid) touches the EA outer bound (magenta dashed) at a corner point when $R_2/C_{\rm coh}^2=R_1/C_{\rm coh}^1$ to the leading order. The gap is of the order of $10^{-5}$ relatively; therefore, at this point, the TMSV source is in fact optimal for the thermal-loss MAC being considered, for this symmetric case where the parameters $N_{{\rm S},k}\ll1$ are identical among the senders. Note this holds although the transmissivities of the senders $\eta_{k}$ are not equal. In other cases, when $N_{{\rm S},1}\neq N_{{\rm S},2}$, regardless of the values of $\eta_{k}$ being equal, a strict gap between the TMSV rate region and the EA outer bound exists. This does not conclude that the TMSV encoding is inferior, though, as the outer bound is likely to be loose.

\begin{figure}[tbp]
    \centering
    \includegraphics[width=0.4\textwidth]{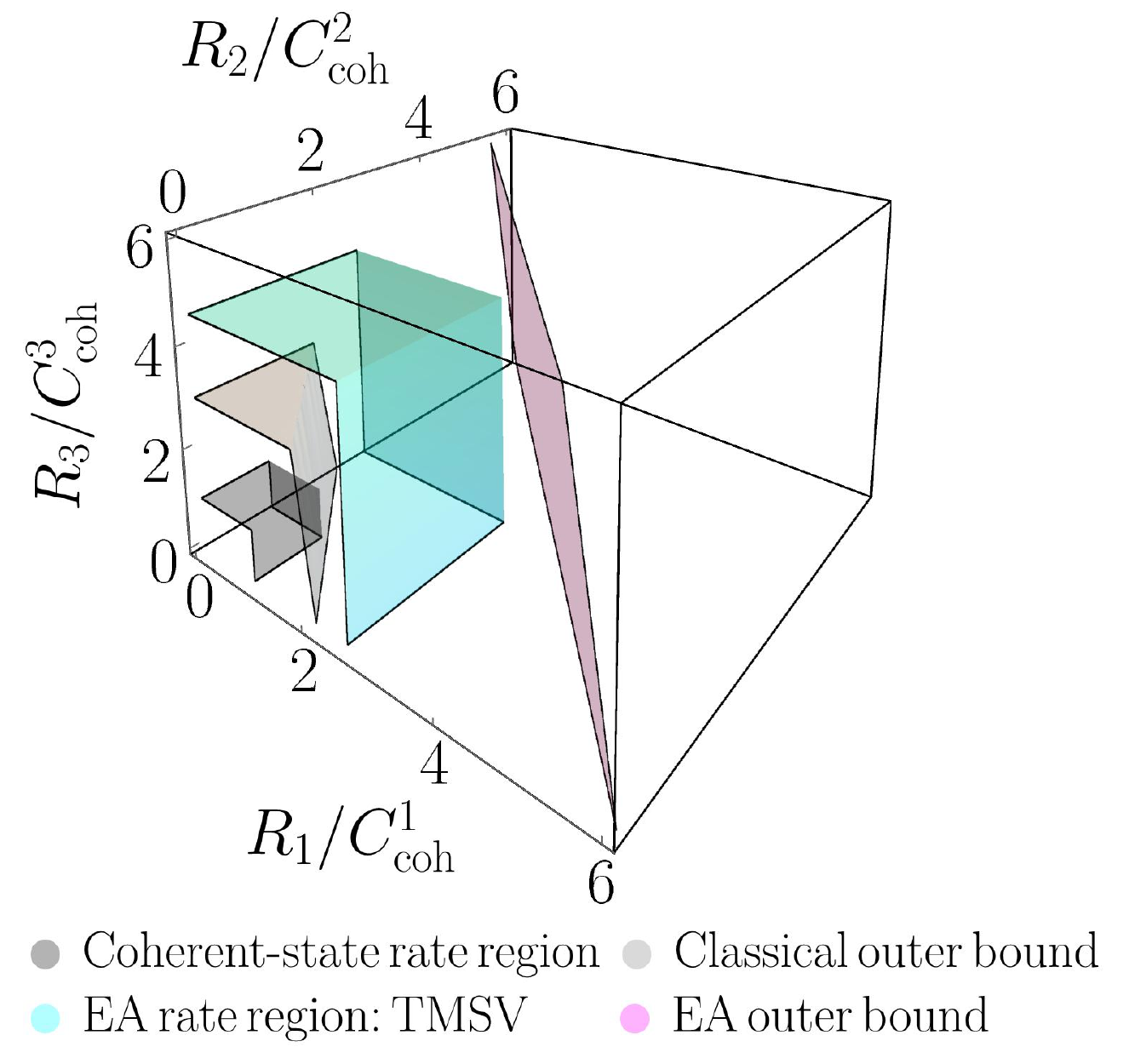}
    \caption{{\bf The asymmetric three-sender EA rate region.} The rates are normalized by the coherent state bound $C_{\rm coh}^1$, $C_{\rm coh}^2$ and $C_{\rm coh}^3$ defined in Ineq.~\eqref{C_J_region}, evaluated in the scenario of microwave domain $N_{{\rm S},1}=N_{{\rm S},2}=0.1, N_{{\rm S},3}=0.01,\tau=0.01,N_{\rm B}=20,\eta_1=\eta_2=\eta_3=1/3$.
    The EA rate region in Ineq.~\eqref{eq:CeMAC_qinfo} (cyan), evaluated on TMSV states, is bounded by the EA outer bound (magenta) in Ineqs.~\eqref{eq:EA_outer_bound}; while the coherent-state rate region (black) given by Ineq.~\eqref{C_J_region} is bounded by the classical outer bound (light gray).
    \label{fig:rate3Dregion_Ce}
    }
\end{figure}

Furthermore, we consider a three-sender asymmetric case ($s=3$), with unequal source brightness $N_{{\rm S},1}=N_{{\rm S},2}\neq N_{{\rm S},3}$. In Fig.~\ref{fig:rate3Dregion_Ce}, a gap emerges between the TMSV rate region (the region below the cyan surface) and the outer bound (the magenta surface), as we expected. An appreciable EA advantage remains as the EA capacity region is several times larger than the coherent state rate region (dark gray surface) and the classical outer bound (light gray surface).

\begin{figure}[tbp]
    \centering
    \includegraphics[width=0.475\textwidth]{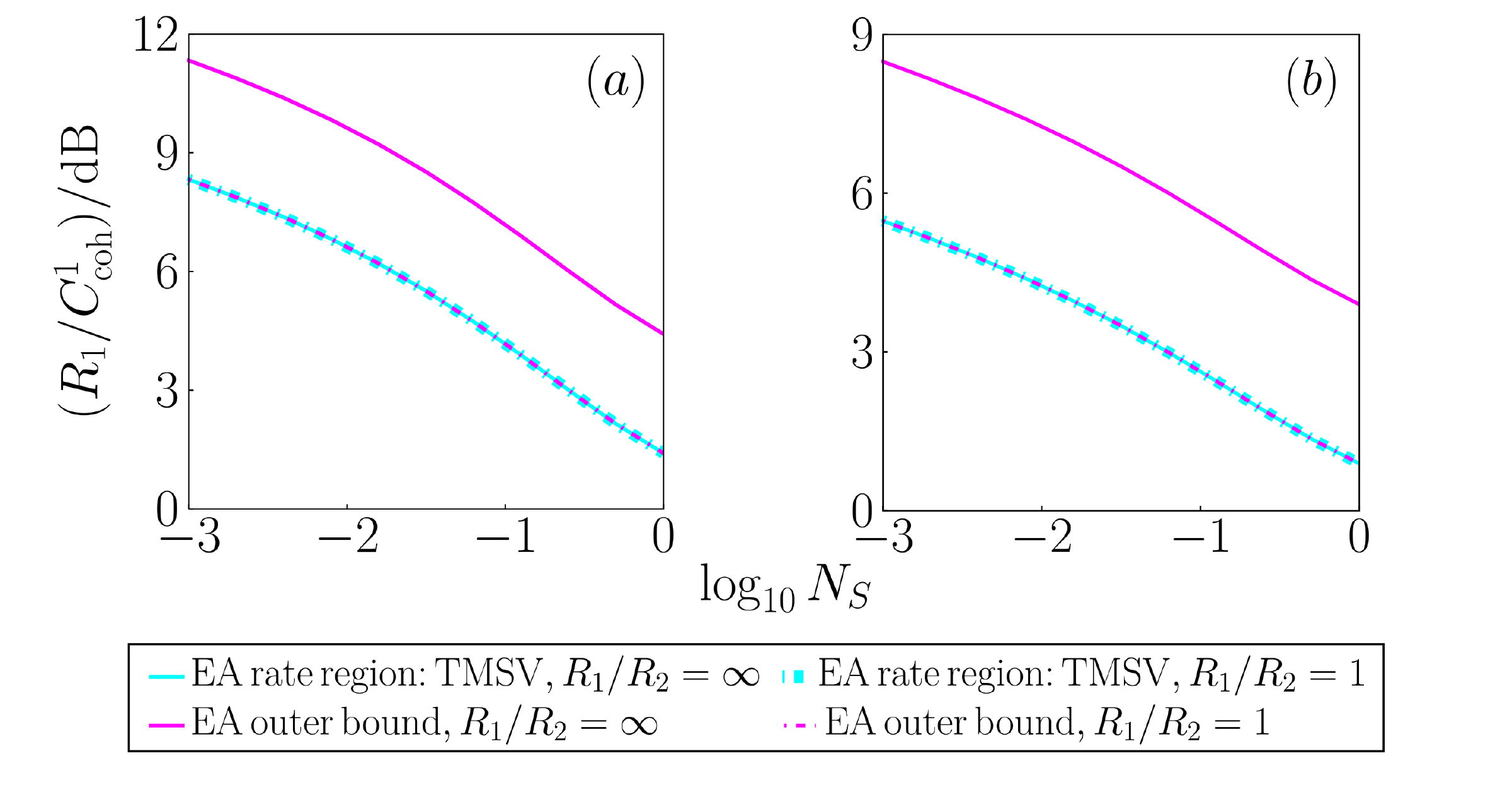}
    \caption{{\bf Rates versus signal brightness.} The dependence on source brightness $N_{{\rm S},1}=N_{{\rm S},2}=N_{\rm S}$ of the EA advantage of the EA rate regions for two-sender MAC communication under the scenario of (a) microwave domain $\eta=1/2,\tau=0.01,N_{\rm B}=20$; (b) long wavelength infrared domain $\eta_1=\eta_2=1/2,\tau=0.001,N_{\rm B}=0.1$. We plot $R_1$ for sender 1 under conditions $R_1/R_2=\infty$ (solid) and $R_1/R_2=1$ (dot-dashed). For TMSV, the two curves overlap. Note that $R_1/R_2=0,\infty$ are equivalent up to a swap due to the symmetry between the two senders; and for given $R_1/R_2$, $R_2/C^2_{\rm coh}=R_1/C^1_{\rm coh}$. We also compare the EA rate region of TMSV (cyan) with the EA outer bound (magenta). 
    \label{fig:Ns_Ce}
    }

\end{figure}

Now we further consider the scaling of the EA advantage observed above.
As shown in Fig.~\ref{fig:Ns_Ce}, the advantage of the EA capacity (magenta) relative to the case without EA also diverges with $\log(N_{\rm S})$, when the signal brightness $N_{\rm S}$ is small and the noise $N_{\rm B}$ is much larger than the signal brightness $N_{\rm S}$. Note that this advantage also holds for the case when $N_{\rm S}\ll N_{\rm B}<1$, as shown in Fig.~\ref{fig:Ns_Ce}(b). This logarithmic diverging EA advantage in MAC is similar to the single-sender single-receiver case studied in Ref.~\cite{shi2020practical}. Indeed, at the limit $\tau\ll1, N_{\rm S}\ll 1$, the relative ratio of the outer bound over the coherent-state rate
\be
\frac{C_{\rm E}\left( N_{{\rm S},k}; \calL^{\tau, N_{\rm B}}\right)}{C_{\rm coh}^k}\simeq \frac{\log (1/N_{{\rm S},k})}{\eta_k(1+N_{\rm B})\log(1+1/N_{\rm B})},
\label{eq:Ce/C}
\ee
is also logarithmic in $1/N_{{\rm S},k}$ when $N_{\rm B}$ is small.

\subsection{Protocol designs for the bosonic EA-MAC}
\noindent In this section, we design a practical protocol to realize EA classical communication over the bosonic thermal-loss MAC. The protocol consists of phase-modulation encoding on the TMSV entanglement source and structured receiver designs.

\subsubsection{Encoding and receiver designs}

\begin{figure*}[tbp]
\centering
\includegraphics[width=0.7\textwidth]{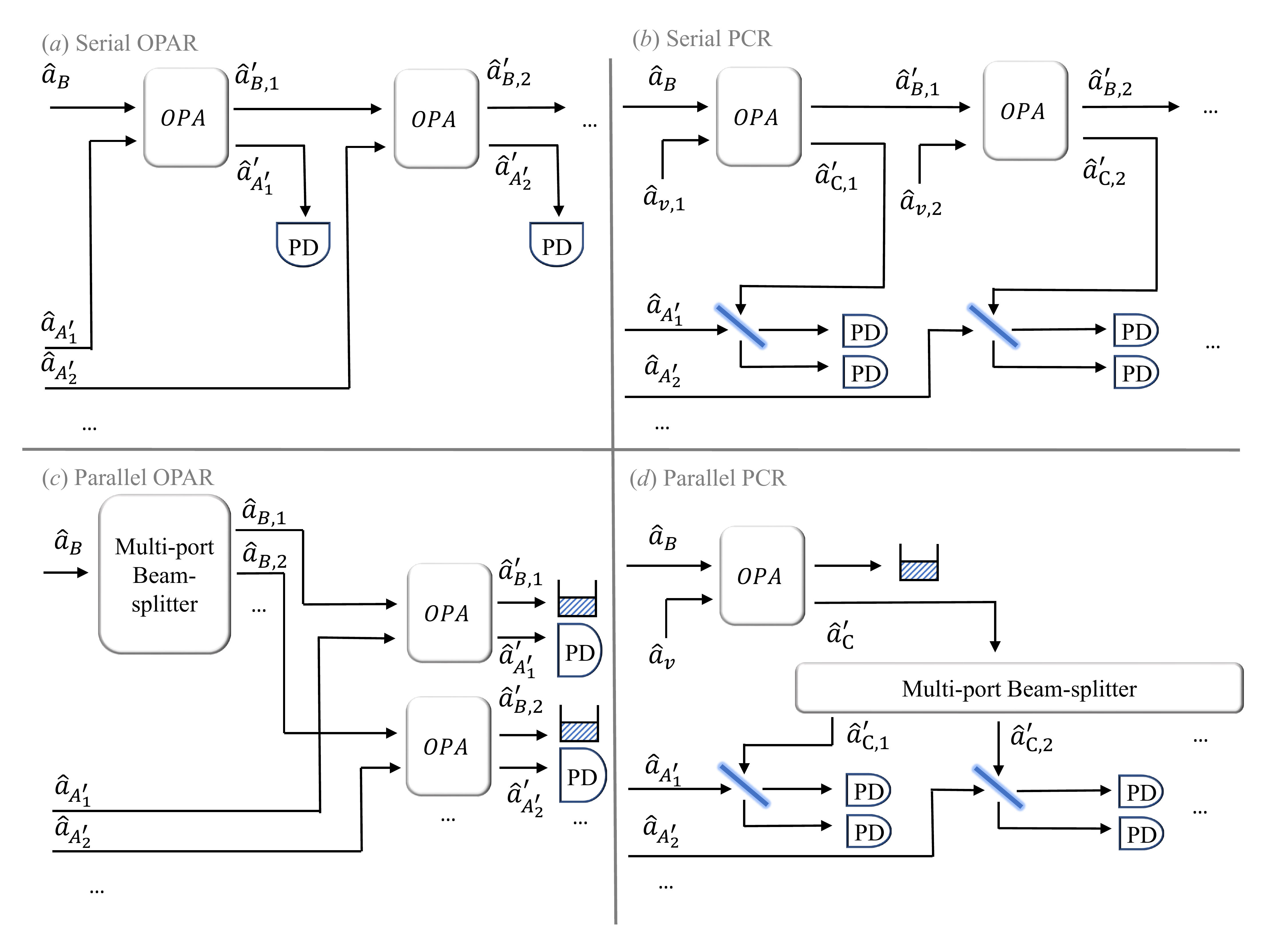}
\caption{{\bf Schematic of four receiver designs.} (a) serial optical-parametric-amplifier receiver (sOPAR) (b) serial phase-conjugate receiver (sPCR) (c) parallel optical-parametric-amplifier receiver (pOPAR) (d) parallel phase-conjugate receiver  (pPCR). 
\label{fig:receivers_main}
}
\end{figure*}

Similar to the single-sender single-receiver case, to encode a bit of information $m_k=0,1$, the $k$th sender performs a phase modulation on the signal part of the TMSV pairs via a unitary
$
\calE_{m_k}=e^{im_k \pi \hat a_{A_k}\d\hat a_{A_k}^{~}}
$
to produce the quantum system $A_k$ input to the MAC, while the idler part of the TMSV pair $A_k^\prime$ is pre-shared to the receiver side for EA. Here we have considered the binary phase-shift keying: the $k$th sender sends the bit message $m_k=0,1$ by the same probability $p_{0}=p_{1}=1/2$. To enable efficient decoding, we consider $N_{\rm R}$ repetition of such encoding---each message is repeatedly encoded on $N_{\rm R}$ signal modes of a single sender.

The decoding process takes the output of the MAC $\hat{a}_B$ and the EA idlers $\{\hat{a}_{A_k^\prime}, 1\le k \le s\}$ to decode the information $\{m_k, 1\le k \le s\}$ of all the senders.
Below we propose four receiver designs for the decoding. The basic element in the receiver design is the optical parametric amplifier (OPA), which upon input modes $\hat{a}_R$ and $\hat{a}_I$, produces two modes
$
\hat{a}_R^\prime=\sqrt{G}\hat{a}_R+\sqrt{G-1}\hat{a}_{I}^\dagger,
\hat{a}_{I}^\prime=\sqrt{G}\hat{a}_{I}+\sqrt{G-1}\hat{a}_{R}^\dagger,
$
where $G$ is the gain of the OPA. An OPA transforms the phase-sensitive correlation between the input mode-pair into the photon number difference  $\Delta\expval{\hat{a}_{I}^{\prime\dagger}\hat{a}_{I}^\prime}\propto \sqrt{G(G-1)}2{\rm Re}\expval{\hat{a}_I\hat{a}_R}$, which is widely utilized to design receivers in EA applications, such as quantum illumination~\cite{guha2009} and the bipartite EA classical communication~\cite{shi2020practical}. Moreover, one can use an OPA as a phase-conjugator to design a phase-conjugate receiver (PCR), as explained in Ref.~\cite{shi2020practical}.

To decode all $s$ messages, one can apply two different strategies, either decode them in a serial manner or in parallel. One can also base the receiver design on the direct OPA or on the phase-conjugation mechanism. These choices lead to four receiver designs---serial OPA receiver (OPAR), serial PCR, parallel OPAR and parallel PCR---as we summarize below (see details in Appendix~\ref{supp:note1}).

In the serially connected scheme, on the $k$th round, the signal output $\hat{a}_{B_{k-1}}^\prime$ from the $(k-1)$th round and the idler $\hat{a}_{A_k}$ are input to an OPA. The idler mode output from the OPA is detected, by direct detection in serial OPAR or an interferometric detection in serial PCR, to decode the message from the $k$th sender. Meanwhile, the signal mode output from the OPA is further utilized in the next round. Note that after the $k$th round, the cross correlation between the signal mode with the other idler modes are almost intact; therefore, performing an OPA on the signal and another idler $\hat{a}_{A_k^\prime}$, one can decode the message from the $k^\prime$th sender. Iterating this procedure on the remaining mode consecutively, one obtains a serial architecture for the receiver, as shown in Fig.~\ref{fig:receivers_main} (a)(b) for the serial OPAR and serial PCR.

We can also adopt a parallel design for the receivers. As the thermal-loss channel in the MAC adds excess noise into the output, we expect that in the noisy case, splitting the received signal into $s$ copies, each for the decoding of the message of a single sender, will still provide similar signal-to-noise ratios (SNR), when compared to the case without the splitting. 
In this way, each portion of the received signal can be utilized in parallel, in each individual OPA component in the parallel OPAR or in each phase-conjugation detection in the parallel PCR, to decode each message. As shown in Fig.~\ref{fig:receivers_main} (c)(d), we can design parallel-OPAR and parallel-PCR schemes.

Finally, we specify the choices of the gain in the OPA. Optimized with respect to the SNR, for OPAR the gains of the $s$ OPAs are to be $G_k=\sqrt{N_{{\rm S},k}}/\sqrt{N_{\rm B}(1+N_{\rm B})}$, $1\le k \le s$. For PCR the optimal gain turns out to be at infinity; however, we find that the performance is saturated when $(G_k-1)N_{\rm B}\gg N_{{\rm S},k}$ for the $k$th sender, thus we choose a feasible value accordingly.

\subsubsection{Receiver rate region evaluations}

\begin{figure}[tbp]
    \centering
    \includegraphics[width=0.4\textwidth]{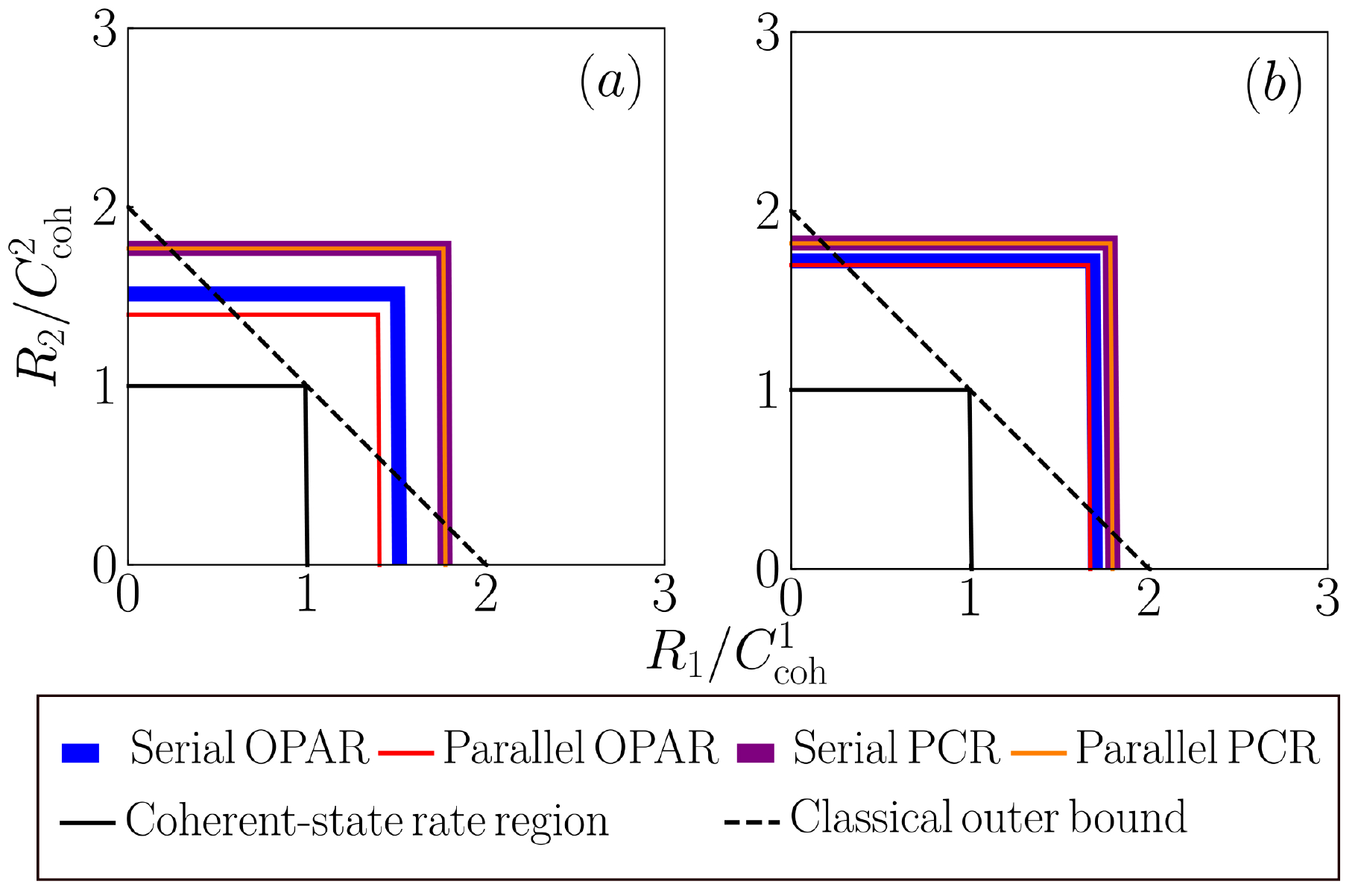}
    \caption{{\bf The two-sender rate region of our four receivers.} The rates are normalized by the coherent state bound $C_{\rm coh}^1$, $C_{\rm coh}^2$ defined in Ineq.~\eqref{C_J_region}: (a) microwave domain $N_{{\rm S},1}=N_{{\rm S},2}=0.01,\tau=0.01,N_{\rm B}=20,\eta_1=\eta_2=1/2, N_{\rm R}=2\times 10^4$;
    (b) a noisy channel with $N_{{\rm S},1}=N_{{\rm S},2}=10^{-3},\tau=10^{-3},N_{\rm B}=10^4,\eta=1/2, N_{\rm R}=10^7$. To distinguish between the overlapping lines, we plot the serial receivers in thicker lines by contrast with the parallel receivers plotted narrowed.
    The gains of OPAR are given in the main text, and the gains of PCR are $G=2$ for $N_{\rm B}=20$ and $G=1+10^{-3}$ for $N_{\rm B}=10^4$. We also compare the receiver rate region with the coherent-state rate (black solid) region and the classical outer bound (black dashed).
    \label{fig:rateregion}
    }
\end{figure}

\begin{figure}[tbp]
    \centering
    \includegraphics[width=0.475\textwidth]{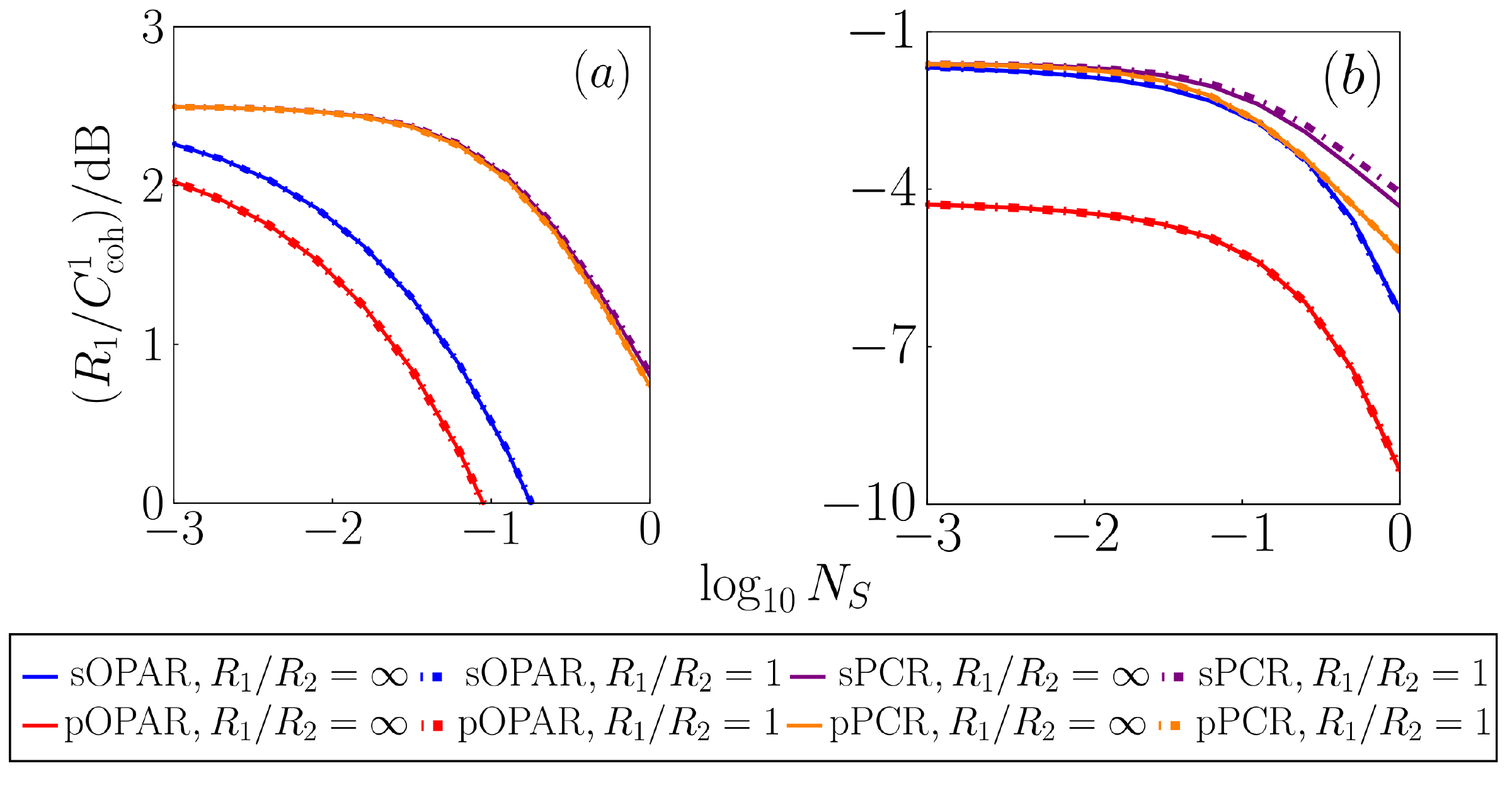}
    \caption{{\bf Receiver rates versus signal brightness.} The dependence on source brightness $N_{{\rm S},1}=N_{{\rm S},2}=N_{\rm S}$ of the EA advantage of the four receivers for two-sender MAC communication under the scenario of (a) microwave domain $\eta=1/2,\tau=0.01,N_{\rm B}=20$; (b) long wavelength infrared domain $\eta_1=\eta_2=1/2,\tau=0.001,N_{\rm B}=0.1$. In the legend s, p refer to `serial' and `parallel' respectively. The number of modes $N_{\rm R}$ is fixed such that the SNR $N_{\rm R}\tau N_{\rm S}/N_{\rm B}=0.1$ for sender $i=1,2$.  We plot $R_1$ for sender 1 under conditions $R_1/R_2=\infty$ (solid) and $R_1/R_2=1$ (dot-dashed). Note that $R_1/R_2=0,\infty$ are equivalent up to a swap due to the symmetry between the two senders; and for given $R_1/R_2$, $R_2/C^2_{\rm coh}=R_1/C^1_{\rm coh}$. 
    }
    \label{fig:Ns}
\end{figure}

As the encoding and receivers are chosen, now the (soft-decoding) rate region is entirely obtained from the classical formula of conditional mutual information~\cite{cover1999elements} computed over the measurement outcome distribution (see Appendix~\ref{supp:note1}). 
As shown in Fig.~\ref{fig:rateregion}, we compare the receiver rate regions with the classical coherent-state rate region in Ineq.~\eqref{C_J_region} (black solid) and the classical outer bound in Ineq.~\eqref{C_outer_bound} jointly and Ineq.~\eqref{C_J_region} (black dashed) with $J$ being all senders. We see that the performance of both OPAR and PCR can beat the classical coherent state rate region and the classical outer bound.

In Fig.~\ref{fig:rateregion}, the performance of the OPAR (blue solid and red solid) is inferior to the PCR (purple solid and orange solid), a gap between which is significant in Fig.~\ref{fig:rateregion}(a). This is because the PCR has a better SNR to the next order in $N_{\rm S}$ compared with OPA, as found in the single-sender case in Ref.~\cite{shi2020practical} and confirmed in Fig.~\ref{fig:Ns} here. As the brightness $N_{\rm S}$ decreases in Fig.~\ref{fig:rateregion}(b), the gap between the PCR and OPAR almost diminishes. In Fig.~\ref{fig:Ns}, we see the rates of OPAR (blue and red) are lower than PCR (purple and orange), with a gap expanding as the brightness $N_{\rm S}$ grows. We also find that the rate advantage of both the OPAR and PCR saturates to 3dB as the brightness $N_{\rm S}$ decreases, consistent with the SNR advantage in quantum illumination~\cite{Tan2008,zhuang2017}. This is because when the noise is large, the information rate is proportional to the SNR. Note that when the channel noise $N_{\rm B}$ decreases, the theoretical EA advantage evaluated by TMSV remains substantial. However, the practical advantage allowed by our receivers diminishes as $N_{\rm B}$ falls below 1.  For $N_{\rm B}=0.1$ (and smaller), there is no advantage for the proposed receivers, as shown in Fig.~\ref{fig:Ns}(b). This leaves an open question that a feasible receiver that provides EA advantage in the low-noise scenario is hitherto elusive.

\section{Discussion}
In this paper, we have solved the capacity region of entanglement-assisted classical communication over a quantum multiple-access channel with an arbitrary number of senders. We also provide explicit encoding and decoding strategies that offer a practical route towards achieving quantum advantages in such network communication scenarios. Due to teleportation~\cite{bennett1993teleporting} and super-dense coding~\cite{bennett1992}, the rate region of EA quantum communication is precisely half of the EA classical communication region; therefore, all of our results can be straightforwardly extended to the case of quantum communication. The explicit protocols can also be used for EA quantum communication via further combining with a teleportation protocol.

Many future directions can be explored. For example, multi-partite entanglement may be considered instead of the product form of Eq.~\eqref{EA_state} to assist the communication scenario, when the senders can collaborate in the entanglement distribution process. Another open question is whether one can have superadditivity  phenomena in our entanglement-assisted capacity region of multiple-access channels.

Before closing, we discuss potential experimental realizations for the proposed EA-MAC communication systems. The basic setup will be similar to that in Ref.~\cite{hao2021entanglement}, with entanglement generated by spontaneous parametric down-conversion in a nonlinear crystal. The receiver can be implemented with another nonlinear crystal to perform phase conjugation or parametric amplification. However, the challenge to demonstrate an entanglement advantage under the multiple-senders scenario is that the pump beams for different entanglement sources need to be frequency and phase locked. Moreover, each stored idler needs to be phase locked to its corresponding signal received from the MAC. Differential-phase encoding can potentially avoid the need for phase locking, which is subject to future studies.

\begin{appendix}

\section{Details of formalism}
\label{app:method}
\subsection{Formal analysis of the EA MAC}
In the MAC communication scenario of Fig.~\ref{fig:EAMACschematic_main}, each encoded signal-idler is in a state $\hat{\sigma}_{A_kA_k^\prime}^{m_k}=\calE_{m_k}\otimes \calI \left(\hat{\phi}_{A_kA_k^\prime}\right)$, where $\calI$ is the identity channel modeling the ideal storage of the idler system. Denote the overall encoding operation as $\calE_m=\otimes_{k=1}^s \calE_{m_k}$, and the probability of sending each message as $p_m=\prod_{k=1}^s p_{m_k}$, the overall encoding can be described by the composite quantum state
\be 
\hat{\Xi}_{MAA^\prime}=\sum_m p_m \state{m}_M \otimes \hat{\sigma}^m_{AA^\prime},
\label{eq:Xi}
\ee
where
$
\hat{\sigma}^m_{AA^\prime}=\otimes_{k=1}^s \hat{\sigma}_{A_kA_k^\prime}^{m_k}\equiv
\left[\calE_{m} \otimes \calI\right] \left(\hat{\phi}_{AA^\prime}\right)
$
is the overall encoded state conditioned on message $m$ and $\state{m}_M$ is the classical register for message $m$.

After the encoding, all of the quantum systems from the $s$ senders $A$ are input to the MAC $\calN_{A\to B}$, which outputs the quantum system $B$ for the receiver to decode the messages jointly with the EA $A^\prime$. The overall state after the channel is
\be 
\hat{\omega}_{MBA^\prime}=\sum_m p_m \state{m}_M \otimes \hat{\beta}^m_{BA^\prime},
\label{eq:omega}
\ee 
where $\hat{\beta}^m_{BA^\prime}=\left[\calN_{A\to B}  \otimes \calI\right] \left(\hat{\sigma}^m_{AA^\prime}\right)$.

The performance metric of a communication scenario over a MAC is described by a vector of rates $(R_1,\cdots,R_s)$, where $R_k$ is the reliable communication rate between the $k$th sender and the receiver. These rates in general have non-trivial trade-offs with each other.
Formally, we define an $(n,R_1,\cdots,R_s,\epsilon)$ EA code by: the prior set $\{p_{\bm m_k^n}\}$, the encoded quantum states $\{\hat\sigma^{\bm m_k^n}\}$, $1\le k \le s$ on $\bm A^n$, with each message $\bm m_k^n\in [2^{nR_k}]$, and the decoding positive operator-valued measure (POVM) $\{\hat{\Lambda}_{m_1\cdots m_s}\}$ on $\bm B^n \bm A^{\prime n}$ such that 
\be 
{\rm Tr}\Bigg\{\hat{\Lambda}_{m_1\cdots m_s}\left[\calN^{\otimes n}\circ \left(\otimes_{k=1}^s\calE_{m_k}\right)\otimes \calI_{\bm A^{\prime n }} \right] \hat{\phi}_{\bm A^n \bm A^{\prime n}} \Bigg\}\ge 1-\epsilon.
\ee 
We say that $(R_1,\cdots,R_s)$ is an achievable rate vector if for all $\epsilon>0,\delta>0$ and sufficiently large $n$, there exists an $(n,R_1-\delta,\cdots,R_s-\delta,\epsilon)$ EA code. The EA classical capacity region $\calC_{\rm E}(\calN)$ is defined to be the closure of the set of all achievable rate vectors. The regularized capacity $\calC_{\rm E}(\calN)$ is the union of all $\ell$-letter one-shot capacity regions $\calC_{\rm E}^{(1)}(\calN^{\otimes \ell})/\ell$, with integers $\ell\ge 1$. The one-shot capacity region $\calC_{\rm E}^{(1)}(\calN)$ is the closure of the subset of achievable rate vectors by $(n,R_1-\delta,\cdots,R_s-\delta,\epsilon)$ codes that are generated from separable inputs among the $n$ channel uses. Here ``one-shot'' is in the sense that the entanglement is constrained in a single channel use. In this regard, the capacity region $\calC_{\rm E}^{(1)}(\calN^{\otimes \ell})$ considers $\calN^{\otimes \ell}$ as a single channel and allows codes with entanglement between $\ell$ uses of $\calN$.
In the case without EA, the capacity region is well-established by the pioneering work of Winter~\cite{winter2001capacity} (see Appendix~\ref{supp:note2}).

\subsection{ Outer bounds for bosonic thermal-loss MAC}
Now we provide the outer bound in Ineqs.~\eqref{eq:EA_outer_bound} for the EA classical capacity region of the bosonic thermal-loss MAC. As we see in Fig.~\ref{fig:MACbosonic}, the overall channel can be written as a concatenation of two parts, $\calN=\calL^{\tau,N_{\rm B}}\circ \calE_{\rm MAC}$, where $\calE_{\rm MAC}$ represents the beamsplitter modeling the signal interference. 
From the bottleneck inequality, the overall communication rate is upper bounded by 
\be 
\sum_{k=1}^s R_k \le C_{\rm E}\left(\sum_{k=1}^s \eta_k N_{{\rm S},k}; \calL^{\tau, N_{\rm B}}\right),
\ee 
the single-sender single-receiver EA classical capacity of the thermal-loss channel $\calL^{\tau,N_{\rm B}}$ with brightness $\sum_{k=1}^s \eta_k N_{{\rm S},k}$. This is because for the channel $\calL^{\tau,N_{\rm B}}$, only a single mode signal $\hat{a}_{A_{\rm mix}}$ in Eq.~\eqref{Amix} with brightness $\expval{\hat{a}_{A_{\rm mix}}^\dagger \hat{a}_{A_{\rm mix}}}=\sum_{k=1}^s \eta_k N_{{\rm S},k}$ goes through. 
Explicitly, the capacity  
\be
C_{\rm E}\left(N_{\rm S}; \calL^{\tau, N_{\rm B}}\right)=g(N_{\rm S})+g(N_{\rm S}^\prime)-g(A_+)-g(A_-),
\label{CE_full}
\ee
with 
$A_\pm=(D-1\pm(N_{\rm S}^\prime-N_{\rm S}))/2$, $N_{\rm S}^\prime=\tau N_{\rm S}+N_{\rm B}$ and $D=\sqrt{(N_{\rm S}+N_{\rm S}^\prime+1)^2-4\tau N_{\rm S}(N_{\rm S}+1)}$. This proves Ineq.~\eqref{eq:EA_outer_bound_sum}.

As for the individual upper bounds for the senders in Ineq.~\eqref{eq:EA_outer_bound_individual}, we consider a theoretical super-receiver with access to all of the output ports of the beamsplitter part. The super-receiver performs the reverse of the beamsplitter transform, after which the communication reduces to the single-sender scenario of which the information rate is bounded by each single-sender single-receiver EA classical capacity. 
Explicitly, we have
\bal
R_k&\leq C_{\rm E}\left( N_{{\rm S},k}; \calL^{\tau, N_{\rm B}}\right), 1\le k \le s,
\eal
which proves Ineq.~\eqref{eq:EA_outer_bound_individual}.

\section{Analyses of the receiver designs}
\label{supp:note1}

To begin with, we briefly summarize the receiver problem to be solved in this section. Note that in this section, we will write subscripts inside subscripts as normal texts so that the equations are not too small to be visible. In the phase encoding scheme, each sender applies a phase rotation on the signal mode $\hat a_{A_k}$ of its TMSV source, which modules the signal-idler correlations by
$\expval{\hat{a}_{A_k}\hat{a}_{A_k^\prime}}=C_{pk}\equiv\sqrt{ N_{{\rm S},k}(N_{{\rm S},k}+1)}e^{i\theta_k}$. For binary encoding, each $\theta_k$ has two possible values $\theta_k^{(0)}=0,\theta_k^{(1)}=\pi$. The signal modes $\hat a_{A_k}$ then goes through the thermal-loss MAC, which produces the received mode mode $\hat{a}_B$ given by Eq.~\eqref{Amix} and Eq.~\eqref{thermal_loss} of the main paper.

Below we assess the receivers in the two-sender scenario as an example. The $s$-sender case is solved in the same way. For convenience, let $\eta_1=\eta,\eta_2=1-\eta$. To ease the reading, we mark the modes after the OPAs with the superscript `$\prime$'.

In the BPSK encoding of all senders, denote the phases of the $s$ senders as $\bm \theta^{(m)}$ conditioned on the message $m$. The phase-modulated photon statistics $P(\bm n|\bm \theta^{(m)})$ of the measurement is derived in each subsections. It is convenient to define the photon statistics with respect to the message $m$: $P(\bm n|m)\equiv P(\bm n|\bm \theta^{(m)})$.
Define the measurement register space of system $B$ as $B_R$, then the conditional Shannon information $I(M[J];B_R|M[J^c])$ can be obtained from the measurement statistics $P(\bm n|m)$
\begin{widetext}
\begin{align}
I(M[J];&B_R|M[J^c])=
\sum_{m[J^c]}p_{m[J^c]}\Bigg\{H\left[\left\{\sum_{m\left[J\right]}p_{m\left[J\right]}P\left(\bm n|m\right)\right\}_{\bm n}\right]
-\sum_{m\left[J\right]}p_{m\left[J\right]} H\left[\left\{P\left(\bm n|m\right) \right\}_{\bm n}\right]\Bigg\},
\label{I_receiver}
\end{align}
\end{widetext}
where $H[\{P(\bm n)\}_{\bm n}]=-\sum_{\bm n} P(\bm n)\log P(\bm n)$ is the Shannon entropy of the distribution $P$. Different from the quantity $I(M[J];B|M[J^c])$, here Eq.~\eqref{I_receiver} is the formula of the optimum information rate for a specific measurement strategy given by one of the four receiver designs.

While the performances of OPAR are evaluated exactly, the performances of PCR are evaluated with a Gaussian approximation on the photon statistics, which is precise when the number of repetition modes $N_{\rm R}$ is large.

\subsection{Serial OPAR (Fig.~\ref{fig:receivers_main}a of the main paper)}

For the first OPA with a gain $G_1$,
\bal
\hat a_{B,1}^{\prime}&=\sqrt{G_1}\hat a_B+\sqrt{G_1-1}\hat a_{A^\prime_1}^\dagger,\\
\hat a_{A^\prime_1}^\prime&=\sqrt{G_1}\hat a_{A^\prime_1}+\sqrt{G_1-1}\hat a_{B}^\dagger
\,.\eal
We measure the photon counts of $N_{\rm R}$ independent and identical (i.i.d.) copies of $\hat a_{A^\prime_1}\p$.

For the second OPA with a gain $G_2$,
\bal
\hat a_{B,2}^{\prime}&=\sqrt{G_2}\hat a_{B,1}^{\prime}+\sqrt{G_2-1}\hat a_{A^\prime_2}^\dagger\\
\hat a_{A^\prime_2}^\prime&=\sqrt{G_2}\hat a_{A^\prime_2}+\sqrt{G_2-1}\hat a_{B,1}^{\prime\dagger}
\,.\eal
We measure the photon counts of $N_{\rm R}$ i.i.d. copies of $\hat a_{A^\prime_2}\p$.

We make decision on the $N_{\rm R}$ i.i.d photon counts of modes $\hat a_{A^\prime_1}^\prime, \hat a_{A^\prime_2}^\prime$. Indeed, $\hat a_{A^\prime_1}^\prime, \hat a_{A^\prime_2}^\prime$ are in a zero-mean Gaussian state with the covariance matrix
\begin{widetext}
\begin{align}
V\equiv
\expval{ \bp
\hat a_{A^\prime_1}^\prime\\
\hat a_{A^\prime_1}^{\prime\dagger}\\
\hat a_{A^\prime_2}^\prime\\
\hat a_{A^\prime_2}^{\prime\dagger}\\
\ep
\bp
\hat a_{A^\prime_1}^{\prime\dagger}& \hat a_{A^\prime_1}^\prime& \hat a_{A^\prime_2}^{\prime\dagger}& \hat a_{A^\prime_2}^\prime
\ep
}
=\bp
a+1 & 0 & c & 0\\
0 & a & 0 & c^\star\\
c^\star & 0 & s+1 & 0\\
0 & c & 0 & s
\ep,
\label{eq:OPACM}
\end{align}
where the constants
\bal 
a=&G_1 N_{{\rm S},1} + 2 \sqrt{\left(G_1-1\right)G_1\tau\eta} \Re C_{p1}  + \left(G_1-1\right) \left[1+N_B + N_S^\star\right]\,,\\
s=&G_2 N_{{\rm S},2} + 2 \sqrt{\left(G_2-1\right)G_2G_1\tau\left(1-\eta\right)} \Re C_{p2}  \\
&+ \left(G_2-1\right) \Big\{\left(G_1-1\right)N_{{\rm S},1} +G_1\big[1+N_B + N_S^\star\big]+2\Re \sqrt{\left(G_1-1\right)G_1\tau\eta}C_{p1}\Big\}\,,\\ 
c=& \sqrt{\left(G_2-1\right)}\Big[\sqrt{\left(G_1-1\right)G_1} \left(N_S^\star+1 + N_{{\rm S},1}\right)+\left(G_1-1\right)\sqrt{\tau\eta}C_{p1}^\star+G_1\sqrt{\tau\eta}C_{p1}\Big]+C_{p2}^\star \sqrt{\tau\eta G_2\left(G_1-1\right)}\,.\\
\eal
\end{widetext}
Here $N_S^\star= \tau\left[ N_{{\rm S},1} \eta +N_{{\rm S},2} \left(1-\eta\right)\right]+N_B$. The dependence on the phases $\theta_1,\theta_2$ lies in the phase-sensitive correlations $C_{p1},C_{p2}$.

Given $V$, we immediately have the covariance matrix of the quadratures $\hat q_{A^\prime_k}^\prime=\hat a_{A^\prime_k}^\prime+\hat a_{A^\prime_k}^{\prime\dagger}$, $\hat p_{A^\prime_k}^\prime=-i\left(\hat a_{A^\prime_k}^\prime-\hat a_{A^\prime_k}^{\prime\dagger}\right)$ for the two modes $k=1,2$ 
\be
V_{\rm quad}\equiv
\expval{
\bp
\hat q_{1}\\
\hat p_{1}\\
\hat q_{2}\\
\hat p_{2}\\
\ep
\bp
\hat q_{1}& \hat p_{1}& \hat q_{2}& \hat p_{2}
\ep
}
=\bp
E & 0 & C & 0\\
0 & E & 0 & C\\
C & 0 & S & 0\\
0 & C & 0 & S
\ep
,
\label{eq:CMqp}
\ee
which can be obtained via the relationship $V_{\rm quad}=UVU^\dagger$ with the transform matrix $U$
\be
U=
\bp
1 & 1 & 0 & 0\\
-i & i & 0 & 0\\
0 & 0 & 1 & 1\\
0 & 0 & -i & i
\ep
\,.\ee

The probability distribution of the random photon counts of such a two-mode Gaussian state with the quadrature covariance matrix Eq.~\eqref{eq:CMqp} is given by
\bal
P(n_1,n_2|\theta_1,\theta_2)=& -4 F_R(1+n_1,1+n_2,1,\frac{4C^2}{XY})
\\
&\times \frac{(-1+C^2+E+S-E S)^{1+n_1 +n_2}}{X^{1+n_1}Y^{1+n_2}},
\label{eq:2dPD}
\eal
where $F_R$ is the regularized hypergeometric function and 
$
X=1+C^2+E-(1+E)S,
Y=C^2-(E-1)(S+1).
$
The information rates are then obtained via Eq.~\eqref{I_receiver}.

\subsection{Serial PCR (Fig.~\ref{fig:receivers_main}b of the main paper)}
For the first OPA with a gain $G_1$,
\bal
\hat a_{B,1}^{\prime}&=\sqrt{G_1}\hat a_B+\sqrt{G_1-1}\hat a_{v,1}\d,\\
\hat a_{C,1}^{\prime}&=\sqrt{G_1}\hat a_{v,1}+\sqrt{G_1-1}\hat a_B\d
\,,
\eal
where the ancilla $\hat{a}_{v,1}$ is in a vacuum mode.
The following balanced beamsplitter yields two arms
$
\hat a_{X,1}=(\hat a_{C,1}^{\prime}+\hat a_{A_1^\prime})/\sqrt{2},
\hat a_{Y,1}=(\hat a_{C,1}^{\prime}-\hat a_{A_1^\prime})/\sqrt{2}.
$
We measure the photon count differences of $N_{\rm R}$ i.i.d. copies between $\hat a_{X,1}$ and $\hat a_{Y,1}$.

For the second OPA with a gain $G_2$,
\bal
\hat a_{B,2}^{\prime}&=\sqrt{G_2}\hat a_{B,1}^{\prime}+\sqrt{G_2-1}\hat a_{v,2}\d,\\
\hat a_{C,2}^{\prime}&=\sqrt{G_2}\hat a_{v,2}+\sqrt{G_2-1}\hat a_{B,1}^{\prime\dagger}
\,.\eal
Similarly the following balanced beamsplitter yields two arms
$
\hat a_{X,2}=(\hat a_{C,2}^{\prime}+\hat a_{A_2^\prime})/\sqrt{2},
\hat a_{Y,2}=(\hat a_{C,2}^{\prime}-\hat a_{A_2^\prime})/\sqrt{2}.
$
We measure the photon count differences between $\hat a_{X,2}$ and $\hat a_{Y,2}$ of $N_{\rm R}$ i.i.d. copies. The modes $\hat a_{X,1}, \hat a_{Y,1}, \hat a_{X,2}, \hat a_{Y,2}^\prime$ are in a zero-mean Gaussian state with the covariance matrix 
\begin{widetext}
\bal
V&\equiv
\expval{
\bp
\hat a_{X,1}^\prime\\
\hat a_{X,1}^{\prime\dagger}\\
\hat a_{Y,1}^\prime\\
\hat a_{Y,1}^{\prime\dagger}\\
\hat a_{X,2}^\prime\\
\hat a_{X,2}^{\prime\dagger}\\
\hat a_{Y,2}^\prime\\
\hat a_{Y,2}^{\prime\dagger}\\
\ep
\bp
\hat a_{X,1}^{\prime\dagger}& \hat a_{X,1}^\prime& \hat a_{Y,1}^{\prime\dagger}& \hat a_{Y,1}^\prime& \hat a_{X,2}^{\prime\dagger}& \hat a_{X,2}^\prime& \hat a_{Y,2}^{\prime\dagger}& \hat a_{Y,2}^\prime
\ep
}
=\left(
\begin{array}{cccccccc}
 {a_1}+1 & 0 & {c_1} & 0 & {a_3} & 0 & {c_{31}} & 0 \\
 0 & {a_1} & 0 & {c_1}^\star & 0 & {a_3}^\star & 0 & {c_{31}}^\star \\
 {c_1}^\star & 0 & {s_1}+1 & 0 & {c_{32}} & 0 & {s_3} & 0 \\
 0 & {c_1} & 0 & {s_1} & 0 & {c_{32}}^\star & 0 & {s_3}^\star \\
 {a_3}^\star & 0 & {c_{32}}^\star & 0 & {a_2}+1 & 0 & {c_2} & 0 \\
 0 & {a_3} & 0 & {c_{32}} & 0 & {a_2} & 0 & {c_2}^\star \\
 {c_{31}}^\star & 0 & {s_3}^\star & 0 & {c_2}^\star & 0 & {s_2}+1 & 0 \\
 0 & {c_{31}} & 0 & {s_3} & 0 & {c_2} & 0 & {s_2} \\
\end{array}
\right),
\label{eq:PCRCM}
\eal
where the constants
\bal
a_1&=\frac{1}{2} \left(2 \Re({C_{p1}}) \sqrt{\eta  ({G_1}-1) \tau }+\eta  {G_1} \tau  {N_{{\rm S},1}}-(\eta -1) ({G_1}-1) \tau  {N_{{\rm S},2}}-\eta  \tau  {N_{{\rm S},1}}+({G_1}-1) ({N_{B}}+1)+{N_{{\rm S},1}}\right)\,,\\
s_1&=\frac{1}{2} \left(-2 \Re({C_{p1}}) \sqrt{\eta  ({G_1}-1) \tau }+\eta  ({G_1}-1) \tau  {N_{{\rm S},1}}-(\eta -1) ({G_1}-1) \tau  {N_{{\rm S},2}}+({G_1}-1) ({N_{B}}+1)+{N_{{\rm S},1}}\right)\,,\\
c_1&=\frac{1}{2} \left(-2 i \Im({C_{p1}}) \sqrt{\eta  ({G_1}-1) \tau }+\eta  (1-{G_1}) \tau  {N_{{\rm S},1}}+(\eta -1) ({G_1}-1) \tau  {N_{{\rm S},2}}+(1-{G_1}) ({N_{B}}+1)+{N_{{\rm S},1}}\right)\,,\\
a_3&=\frac{1}{2} \left({C_{p2}}^\star \sqrt{(1-\eta ) ({G_1}-1) \tau }+{C_{p1}} \sqrt{\eta  {G_1} ({G_2}-1) \tau }+\sqrt{({G_1}-1) {G_1} ({G_2}-1)} (N_S^\star+1)\right)\,,\\
s_3&=\frac{1}{2} \left(-{C_{p2}}^\star \sqrt{(1-\eta ) ({G_1}-1) \tau }+{C_{p1}} \left(-\sqrt{\eta  {G_1} ({G_2}-1) \tau }\right)+\sqrt{({G_1}-1) {G_1} ({G_2}-1)} (N_S^\star+1)\right)\,,\\
c_{31}&=\frac{1}{2} \left({C_{p2}}^\star \sqrt{(1-\eta ) ({G_1}-1) \tau }+{C_{p1}} \left(-\sqrt{\eta  {G_1} ({G_2}-1) \tau }\right)-\sqrt{({G_1}-1) {G_1} ({G_2}-1)} (N_S^\star+1)\right)\,,\\
c_{32}&=\frac{1}{2} \left(-{C_{p2}}^\star \sqrt{(1-\eta ) ({G_1}-1) \tau }+{C_{p1}} \sqrt{\eta  {G_1} ({G_2}-1) \tau }-\sqrt{({G_1}-1) {G_1} ({G_2}-1)} (N_S^\star+1)\right)\,,\\
a_2&=\frac{1}{2} \left(2 \Re({C_{p2}}) \sqrt{(1-\eta ) {G_1} ({G_2}-1) \tau }+{G_1} ({G_2}-1) (N_S^\star+1)+{N_{{\rm S},2}}\right)\,,\\
s_2&=\frac{1}{2} \left(-2 \Re({C_{p2}}) \sqrt{(1-\eta ) {G_1} ({G_2}-1) \tau }+{G_1} ({G_2}-1) (N_S^\star+1)+{N_{{\rm S},2}}\right)\,,\\
c_2&=\frac{1}{2} \left(-2 i \Im({C_{p2}}) \sqrt{(1-\eta ) {G_1} ({G_2}-1) \tau }-{G_1} ({G_2}-1) (N_S^\star+1)+{N_{{\rm S},2}}\right)\,.\\
\eal
\end{widetext}
The dependence on phase modulation $\theta_1,\theta_2$ lies in the phase-sensitive correlations $C_{p1},C_{p2}$. 

As it is challenging to numerically evaluate the photon count distribution of a four-mode Gaussian state, we estimate the performance of PCRs by a Gaussian approximation. From the covariance matrix $V$ of $\hat a_{X,1}, \hat a_{Y,1}, \hat a_{X,2}, \hat a_{Y,2}^\prime$, by Wick's theorem we immediately have the $2\times 2$ covariance matrix $V_d$ of the photon differences $\Delta\hat N_{i}=\hat N_{X,i}-\hat N_{Y,i}=\expval{\hat a_{X,i}\hat a_{X,i}}-\expval{\hat a_{Y,i}\hat a_{Y,i}}$ for the two slices $i=1,2$. Due to the central limit theorem, the mean distribution of $N_{\rm R}$ i.i.d. copies of random variables $\Delta N_{1},\Delta N_{2}$ converges to the Gaussian distribution when $N_{\rm R}\to\infty$
\be
P(\bm n|\theta_1,\theta_2)=\frac{1}{(2\pi)^2 \sqrt{\text{det}\,V_d^{(N_{\rm R})}}}\exp{-\frac{\bm n^T V_d^{(N_{\rm R})-1}\bm n}{2}},
\label{eq:GausApprox}
\ee
where $V_d^{(N_{\rm R})}=V_d/\QZ{N_{\rm R}}$ is the covariance matrix of the $N_{\rm R}$-copy mean distribution, $\bm n$ is the two-dimensional output averaged over $N_{\rm R}$ copies of the random photon count differences $\Delta N_{1},\Delta N_{2}$. The information rates are then obtained via Eq.~\eqref{I_receiver}.

\subsection{Parallel OPAR (Fig.~\ref{fig:receivers_main}c of the main paper)}
At the receiver side we apply a beamsplitter to slice the \QZ{received} mode $\hat a_B$ into two copies $i=1,2$
\be
\hat a_{B,i}=\sqrt{\eta_i}\hat a_B+\sqrt{1-\eta_i}\hat a_{v,i}
\,,
\ee
where each ancilla $\hat{a}_{v,i}$ is in vacuum.
For each OPA $i=1,2$
\bal
\hat a_{B,i}^\prime&=\sqrt{G_i}\hat a_{B,i}+\sqrt{G_i-1}\hat a_{A_i^\prime}^\dagger,\\
\hat a_{A_i^\prime}^\prime&=\sqrt{G_i}\hat a_{A_i^\prime}+\sqrt{G_i-1}\hat a_{B,i}^\dagger
\,.\eal
We measure the photon counts of $N_{\rm R}$ i.i.d. copies of $\hat a_{A_1^\prime}\p$, $\hat a_{A_2^\prime}\p$. These modes are in a Gaussian state with the covariance matrix in the same form of Eq.~\eqref{eq:OPACM} valued by
\begin{widetext}
\bal
a=&\eta  (G_1-1) N_B+2 \eta  \sqrt{\left(G_1-1\right) G_1 \tau } \Re\left(C_{{p1}}\right)+\eta  \left(G_1-1\right) \tau  \left(\eta  N_{{\rm S},1}+(1-\eta ) N_{{\rm S},2}\right)+G_1 N_{{\rm S},1}+G_1-1\,,\\
s=&(G_2-1) \left[(1-\eta ) \left(N_B+\eta  \tau  N_{{\rm S},1}\right)+(1-\eta )^2 \tau  N_{{\rm S},2}\right]+G_2(N_{{\rm S},2}+1)+2 (\eta -1) \sqrt{\left(G_2-1\right) G_2 \tau } \Re\left(C_{{p2}}\right)-1\,,\\
c=&-\sqrt{(1-\eta ) \eta  \left(G_2-1\right)} \left(\sqrt{G_1-1} N_S^\star+C_{{p1}} \sqrt{G_1 \tau }\right)+C_{{p2}}^\star \sqrt{-(\eta -1) \eta  \left(G_1-1\right) G_2 \tau }\,.\\
\eal
\end{widetext}
where $N_S^\star=\tau[\eta N_{{\rm S},1}+(1-\eta )  N_{{\rm S},2}]+N_B$. The photon count distribution is given by Eq.~\eqref{eq:2dPD} with quadrature covariance matrix evaluated by plugging $a,s,c$ in Eq.~\eqref{eq:CMqp}. The information rates are then obtained via Eq.~\eqref{I_receiver}.

\subsection{Parallel PCR (Fig.~\ref{fig:receivers_main}d of the main paper)}
By contrast with the parallel OPAR, at the receiver side we first apply the phase conjugation by an OPA to obtain
\bal
\hat a_{C}^{\prime}&=\sqrt{G}\hat a_{v}+\sqrt{G-1}\hat a_B\d
\,,\eal
where the ancilla $\hat a_{v}$ is in vacuum. Then a beamsplitter slices the mode $\hat a_{C}^{\prime}$ into two copies $i=1,2$,
\bal
\hat a_{C,i}^{\prime}=\sqrt{\eta_i}\hat a_C^{\prime}+\sqrt{1-\eta_i}\hat a_{v,i}
\,.\eal
For each slice $i=1,2$, a balanced beamsplitter yields two arms
$
\hat a_{X,i}=(\hat a_{C,i}+\hat a_{A_i^\prime})/\sqrt{2},
\hat a_{Y,i}=(\hat a_{C,i}-\hat a_{A_i^\prime})/\sqrt{2}.
$
We measure the photon count differences between $\hat a_{X,i}$ and $\hat a_{Y,i}$ of $N_{\rm R}$ i.i.d. copies. Here $\hat a_{X,1}, \hat a_{Y,1}, \hat a_{X,2}, \hat a_{Y,2}^\prime$ are in a zero-mean Gaussian state with the covariance matrix in the same form of \eqref{eq:PCRCM}. The evaluation of parameters $a_1,s_1,c_1, \ldots$ is straightforward and lengthy, so we omit it here. A Gaussian approximation for the probability distribution of the measurement results follows from Eq.~\eqref{eq:GausApprox}. The information rates are then obtained via Eq.~\eqref{I_receiver}.

\section{MAC HSW theorem}
\label{supp:note2}

\label{winter_theorem}
\begin{figure}[b]
    \centering
    \includegraphics[width=0.4\textwidth]{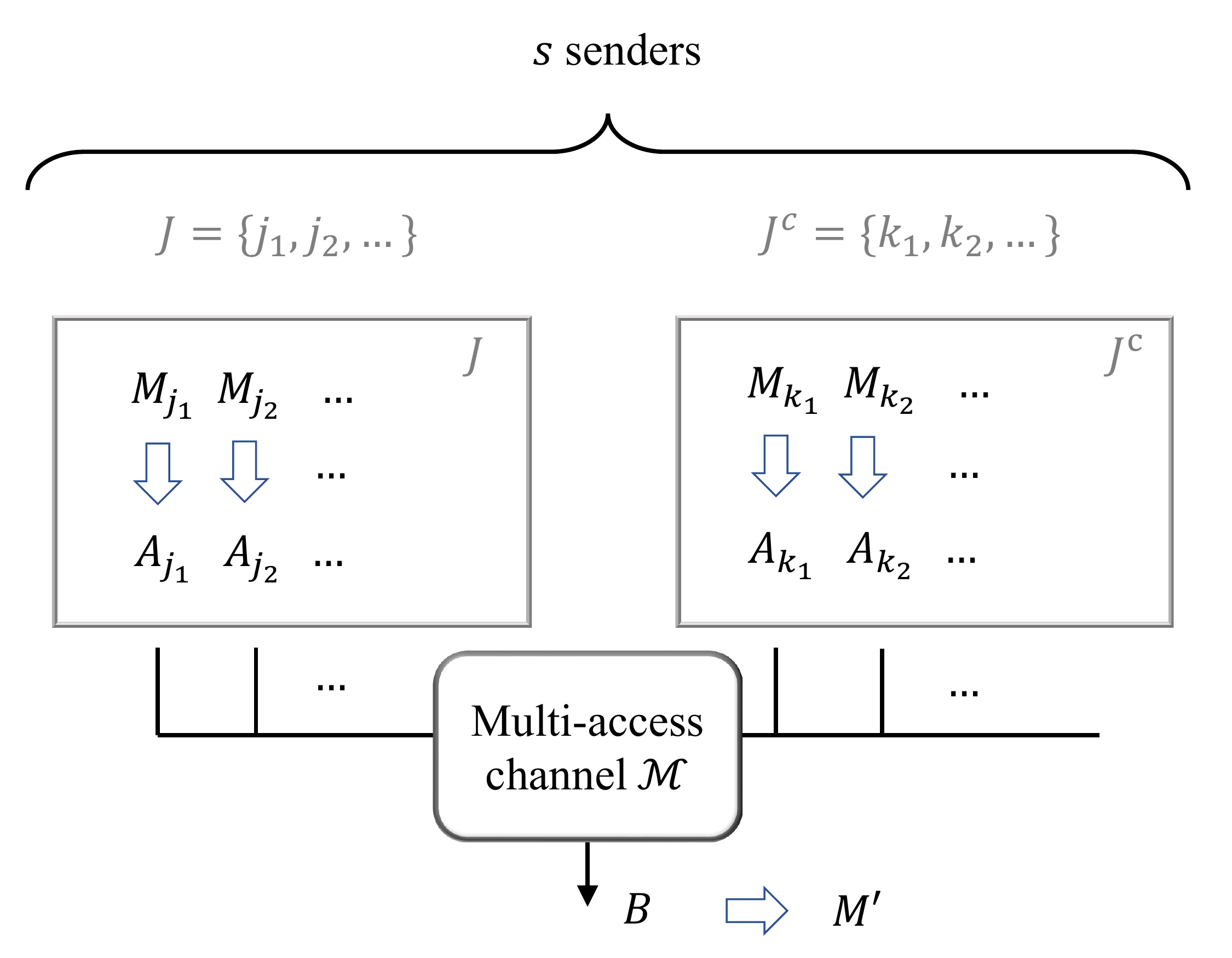}
    \caption{The protocol of general MAC communication. The $s$ senders individually prepare the quantum states in the corresponding quantum system $A_i$ given their messages in each codeword space $M_i$. The receiver reconstructs a message in codeword space ${M'}$ decoded from the output in quantum system ${B}$.}
    \label{fig:MACprotocol}
\end{figure}

Consider the $s$-sender MAC $\calM$. Here we deliberately use a different notation $\calM$ instead of $\calN$ for reasons that will become apparent later. The scenario of MAC classical communication without EA is similar to the EA case in Fig.~\ref{fig:EAMACschematic_main} of the main paper, except that the EA systems $A^\prime$ are gone. We also adopt the same notation, as shown in Fig.~\ref{fig:MACprotocol}.

The receiver decodes the messages $m=m_1m_2\ldots m_s$ of all senders, of which the word space is denoted as ${M'}$. Accordingly, we can write the message as a bipartition $m=m[J] m[J^c]$. The input encoding of the MAC $\{p_m,\hat{\sigma}^m_A\}$ involves sending each quantum state $\hat{\sigma}^m_A$ through MAC $\calM$ by probability $p_m$. The overall state after the encoding is
\be 
\hat{\Xi}_{MA}=\sum_m p_m \state{m}_M \otimes \hat{\sigma}^m_{A}
\,.\ee
Note that due to the independence between the $s$ senders, the overall probability and state of each encoding can be written as the products
$ 
p_m=\prod_{i=1}^s p_{m_i}\,, \hat{\sigma}^m_A=\otimes_{i=1}^s \hat{\sigma}_{A_i}\,,
$
where $A_i$ denote the quantum system of sender $i$.
Alternatively, for an arbitrary bipartition $J,J^c$, we have
$ 
p_{m}=p_{{m[J]}}p_{{m[J^c]}}\,, \hat{\sigma}^m_A=\hat{\sigma}_{{A}[J]}\otimes \hat{\sigma}_{{A}[J^c]}.
$ 
The MAC maps the composite input ${A}$ to a single output quantum system ${B}$, leading to the output 
\be 
\hat{\omega}_{MB}=\sum_m p_m \state{m}_M \otimes \calM_{A\to B} (\hat{\sigma}^m_A).
\label{omega:C}
\ee

With the above notations prepared, we can introduce the information theoretical quantities. To begin with, we introduce the conditional entropy
\be 
S\left({B}|M[J^c]\right)_{\hat{\omega}}=\sum_{{m[J^c]}} p_{{m[J^c]}} S\left[\calM_{A\to B}\left(\sum_{{m[J]}}p_{{m[J]}} \hat{\sigma}^m_A\right)\right],
\ee 
similarly, noting that $M=M[J]M[J^c]$ and $p_m=p_{{m[J]}}p_{{m[J^c]}}$ we can have the notation 
\be 
S\left({B}|M\right)_{\hat{\omega}}=\sum_{m}p_m S\left[\calM_{A\to B}\left(\hat{\sigma}^m_A\right)\right].
\ee 
We introduce the conditional quantum mutual information 
\bal
I\left(M[J]; B| M[J^c]\right)_{\hat{w}}=
 S\left({B}|M[J^c]\right)_{\hat{\omega}}-S\left({B}|M\right)_{\hat{\omega}}
\eal
 
\QZ{The one-shot capacity region of the quantum MAC $\calM$ can be directly obtained from the classical-quantum channel formalism in Ref.~\cite{winter2001capacity}. 
In this case, we can reduce the $(n,R_1,\ldots,R_s,\epsilon)$ code defined in Appendix~\ref{app:method} to the $n$-block code proposed in Ref.~\cite{winter2001capacity}, as we detail below. In the $n$ quantum channel uses, sender $k$ encodes the $n$-partite codeword $\bm w_k^n$ from the code space $\bm W_k^n$ of size $2^{nR_k}$. Without loss of generality, we consider each codeword to be chosen with equal probability $ p_{{\bm w}_k^n}=1/2^{nR_k}$. A classical $n$-block coding maps the codewords from $\bm W_k^n$ to $\bm M_k^n$ as
\be 
F:\bm w_k^n\to \bm m_k^n\equiv m_k^{(1)}\ldots m_k^{(n)}\,, 1\le k\le s,
\label{eq:ccencoding}
\ee 
where each block $m_k^{(u)}$ is in the message space $M_k^{(u)}$, the $n$ blocks together form message $\bm m_k^n$ in $\bm M_k^n$. Alternatively, $F$ can be implemented by sending each message $\bm m_k^n$ subject to probability distribution
\be 
p_{\bm m_k^n}=\sum_{\bm w_k^n\in \calS(\bm m_k^n)}p_{\bm w_k^n}\,,
\ee
where $\calS(\bm m_k^n)$ is the set spanned by $\bm w_k^n$'s satisfying $F(\bm w_k^n)=\bm m_k^n$. We define the marginal distribution for each block $p_{m_k^{(u)}}$ for $1\le u \le n$. 
For the one-shot capacity region, entanglement is forbidden among the $n$ blocks and we consider product states as the encoding \be 
\hat{\sigma}^{\bm m_k^n}=\otimes_{u=1}^n \hat{\sigma}^{m_k^{(u)}}_u\,,\, 1\le k\le s\,,
\ee 
where $\hat{\sigma}^{m_k^{(u)}}_u$ is the state of the quantum system at the $u$th block for sender $k$. Without loss of generality, we can always fix the classical-quantum encoding mapping
\be 
\calF_k:m_k^{(u)}\to \hat{\sigma}^{m_k^{(u)}}\,,\,1\le k\le s\,,
\label{eq:cqencoding}
\ee 
for all channel uses $1\le u \le n$. This is because time sharing of different $\{\hat\sigma^{m_k^{(u)}}_u\}$'s over $u\in [1, n]$ is available by expanding the domain of $\{m_k^{(u)}\}$ and varying the support of $\{p_{m_k^{(u)}}\}$ per block.
As a result, we can always combine the encoding $\{p_{{\bm m}_k^n},\hat{\sigma}^{{\bm m}_k^n}\}$ ($1\le k\le s$) with the quantum channel $\calN^{\otimes n}$ into an $n$-block classical-quantum channel ${\cal W}^{\otimes n}$, with each classical-quantum channel defined by
\be 
{\cal W}\equiv\calM\circ(\otimes_{k=1}^s\calF_k): m_1  m_2\ldots  m_s\to\calM(\otimes_{k=1}^s\hat{\sigma}^{m_k}).
\label{eq:cqchannel}
\ee 
To summarize, when considering the one-shot capacity region, a quantum MAC $\calM$ with any encoding $\{p_{\bm m_k^n},\hat{\sigma}^{\bm m_k^n}\}$, $1\le k\le s$, can be reduced to a classical-quantum channel ${\cal W}$ with certain classical coding $F$.
}

With the notations introduced, \QZ{we restate Theorem 9 and Theorem 10 of Ref.~\cite{winter2001capacity} in a version adapted to the communication over a quantum MAC without EA,} in preparation for the proof of Theorem~\ref{theorem: EA_MAC_main} of the main paper. 

\QZ{
\begin{lemma}
\label{lemma:mac_holevo}
Let $(R_1,\ldots, R_s)$ be any non-negative-valued vector, satisfying the constraints
\bal
\sum_{i\in J}R_i&\leq 
I\left(M[J]; B| M[J^c]\right)_{\hat{\omega}},
\label{Holevo_mutual}
\eal
for some encoding $\{p_{m_k},\hat{\sigma}^{m_k}\}$, $1\le k\le s$, where $\hat{\omega}$ is defined in Eq.~\eqref{omega:C} relying on the encoding. Then, for every $\epsilon,\delta>0$ and all sufficiently large $n$, there exists a
$(n,R_1-\delta,\ldots,R_s-\delta,\epsilon)$-code, i.e. $(R_1,\ldots, R_s)$ is achievable.
\end{lemma}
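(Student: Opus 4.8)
The plan is to reduce the claim to an achievability statement for a single memoryless classical-quantum (cq) channel and then run a random-coding argument with successive decoding, following Winter's proof~\cite{winter2001capacity}. By the reduction set up above, once the per-block encodings $\calF_k$ are fixed the quantum MAC $\calM$ together with those encodings becomes the cq-MAC ${\cal W}\equiv\calM\circ(\otimes_{k=1}^s\calF_k):m_1\cdots m_s\mapsto\calM(\otimes_{k=1}^s\hat\sigma^{m_k})$, whose $n$-fold use is ${\cal W}^{\otimes n}$; it therefore suffices to produce, for any rate vector \emph{strictly} inside the region cut out by Ineq.~\eqref{Holevo_mutual}, codes for ${\cal W}$ with vanishing average error. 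The $\delta$-margin in the statement then absorbs the passage from strict to non-strict inequalities, and a standard expurgation of the worst half of the codewords converts small average error into small maximal error, yielding the required $(n,R_1-\delta,\ldots,R_s-\delta,\epsilon)$ code. For the code itself I would, for each sender $k$, draw $2^{nR_k}$ codewords $\bm m_k^n(w_k)$ symbol-by-symbol i.i.d.\ from the single-letter prior $p_{m_k}$, independently across senders, so that a transmitted tuple $(w_1,\ldots,w_s)$ produces the product output state $\otimes_{u=1}^n\calM\!\left(\otimes_{k=1}^s\hat\sigma^{m_k^{(u)}(w_k)}\right)$ on $\bm B^n$.

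The decoding is where the quantum subtlety enters, and the plan is to avoid a genuine quantum simultaneous decoder by first reaching the \emph{vertices} of the region. Fixing an ordering $\pi$ of the senders, the receiver decodes $w_{\pi(1)},w_{\pi(2)},\ldots$ sequentially: at stage $j$ it applies a pretty-good (square-root) measurement on $\bm B^n$ built from conditionally typical projectors for the already-decoded messages $w_{\pi(1)},\ldots,w_{\pi(j-1)}$, with the not-yet-decoded senders' inputs traced out. By the (conditional) quantum packing lemma~\cite{hsieh2008entanglement,winter2001capacity}, stage $j$ errs with vanishing probability provided $R_{\pi(j)}<I\!\left(M_{\pi(j)};B\,\middle|\,M_{\pi(1)}\cdots M_{\pi(j-1)}\right)_{\hat\omega}$, and summing over $j$ shows the vertex associated with $\pi$ is achievable; the union bound over the $2^s-1$ ``confusion events'' ${\cal E}_J$ (decoder wrong on exactly the senders in $J\neq\varnothing$) is what organizes this error analysis, the contribution of ${\cal E}_J$ being of order $2^{-n\left(I(M[J];B|M[J^c])_{\hat\omega}-\sum_{i\in J}R_i\right)}$. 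Since the region of Ineq.~\eqref{Holevo_mutual} is the convex hull of these vertices together with the coordinate directions---$J\mapsto I(M[J];B|M[J^c])_{\hat\omega}=S(B|M[J^c])_{\hat\omega}-S(B|M)_{\hat\omega}$ being a monotone submodular rank function for product inputs (submodularity of $S\mapsto S(B|M_S)_{\hat\omega}$ is preserved under complementation)---time-sharing among codes tuned to the various vertices then reaches every point of the region.

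The hard part, and the reason the argument is routed through successive decoding rather than a direct joint decoder, is that a fully quantum \emph{simultaneous} decoder resolving all $s$ messages at once---with the clean joint-typicality guarantee of the classical Ahlswede--Liao proof---is not available in general, because the relevant conditionally typical projectors fail to commute and ``intersecting typical sets'' is not free. Winter's resolution, which I would adopt, is to rely only on per-sender conditional typicality along a fixed decoding order to reach each vertex, and to recover the full region by time-sharing; the remaining work is then bookkeeping---verifying the packing-lemma hypotheses at each stage and the polymatroid structure of the region---rather than any further conceptual difficulty.
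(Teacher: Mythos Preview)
Your approach is correct and matches the paper's: both reduce the quantum MAC with fixed per-block encodings $\calF_k$ to the memoryless cq-MAC ${\cal W}=\calM\circ(\otimes_k\calF_k)$ and then invoke Winter's achievability result (Theorem~9 of Ref.~\cite{winter2001capacity}); the paper simply cites Winter, whereas you additionally sketch the successive-decoding-plus-time-sharing argument underlying that theorem. One small presentational point: the union-bound-over-$\calE_J$ exponents you mention are really the \emph{simultaneous} (classical joint-typicality) picture, not what the successive decoder actually controls, so in a clean write-up you would drop that sentence and let the per-stage packing-lemma bounds plus the polymatroid vertex characterization carry the argument.
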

\begin{proof}
With the reduction to classical-quantum channel defined by Eq.~\eqref{eq:cqchannel} and Eq.~\eqref{eq:cqencoding}, this lemma immediately follows from Theorem 9 in Ref.~\cite{winter2001capacity}. We note that the capacity achieving encoding is subject to a product prior distribution $p_{\bm m_k^n}=\prod_{u=1}^n p_{m_k^{(u)}}$, and a product-state encoding $\hat{\sigma}^{\bm m_k^n}=\otimes_{u=1}^n\hat{\sigma}^{m_k^{(u)}}$.
\end{proof}
}

\begin{theorem}
\label{theorem:mac_hsw}
 For any $s$-sender quantum MAC $\calM$, \QZ{given a constrained set of classical-quantum encoding mappings $\calF_k: m_k\to \hat{\sigma}^{m_k}$, $1\le k\le s$, the one-shot constrained capacity} region of reliable classical communication over $\calM$ is \QZ{the convex hull over all rates $(R_1,R_2, \ldots, R_s)$ satisfying the $2^s$ inequalities}
\bal
\sum_{i\in J}R_i&\leq 
I\left(M[J]; B| M[J^c]\right)_{\hat{\omega}},
\label{HSW_mutual}
\eal
\QZ{for some encoding $\{p_{m_k}, \hat{\sigma}^{m_k}\}$, $1\le k\le s$.} Here $\hat{\omega}$ is defined in Eq.~\eqref{omega:C} relying on the encoding.
\end{theorem}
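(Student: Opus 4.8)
The plan is to establish the two inclusions between the stated region and the one-shot constrained capacity region. Write $\mathcal{P}$ for the region in the statement: the convex hull, taken over priors $\{p_{m_k}\}$ with the maps $\calF_k$ held fixed, of the polytopes $\{(R_1,\ldots,R_s)\ge 0 : \sum_{i\in J}R_i\le I(M[J];B|M[J^c])_{\hat\omega}\ \forall J\}$, with $\hat\omega$ of Eq.~\eqref{omega:C}. For achievability ($\mathcal{P}$ lies inside the capacity region), Lemma~\ref{lemma:mac_holevo} already gives that every rate vector obeying the $2^s$ inequalities for one fixed prior is achievable; the convex hull is reached by a routine time-sharing argument --- to realize $\lambda\bm R+(1-\lambda)\bm R'$ from two achievable codes, run the first on $\lfloor\lambda n\rfloor$ channel uses and the second on the remaining ones, so that per-sender rates combine convexly while the total error is at most the sum of the two and still vanishes. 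This is precisely where the convex hull in the statement is needed, since a single fixed product prior need not by itself trace out a convex set.

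For the converse (the capacity region lies inside $\mathcal{P}$), take any $(n,R_1-\delta,\ldots,R_s-\delta,\epsilon)$ code built, as the one-shot constraint requires, from inputs that are product states across the $n$ channel uses and from the fixed maps $\calF_k$; let $\hat\Omega$ be the resulting classical-quantum state on $\bm M\bm B^n$. Fano's inequality, applied after data processing through the decoding measurement, yields for each subset $J$ the bound $n\sum_{i\in J}(R_i-\delta)=H(\bm M[J])=H(\bm M[J]\mid\bm M[J^c])\le I(\bm M[J];\bm B^n\mid\bm M[J^c])_{\hat\Omega}+n\varepsilon_n$, with $\varepsilon_n\to 0$. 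The second step is single-letterization: expanding $I(\bm M[J];\bm B^n\mid\bm M[J^c])$ by the chain rule over the output blocks $B_u$, using that product-state encoding through a memoryless channel makes $S(B_u\mid\bm B^{<u}\bm M)=S(B_u\mid M^{(u)})$, and using that discarding the extra conditioning systems $\bm B^{<u}$ and $\{M[J^c]^{(v)}\}_{v\ne u}$ can only raise entropy, one obtains $I(\bm M[J];\bm B^n\mid\bm M[J^c])\le\sum_{u=1}^{n}I(M[J]^{(u)};B_u\mid M[J^c]^{(u)})$, each summand being the single-shot quantity for the product block prior $p^{(u)}$ induced by the code.

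Consequently the achieved rate vector satisfies $\sum_{i\in J}R_i\le\bar f(J)+\delta+\varepsilon_n$ for all $J$, where $\bar f(J)=\tfrac1n\sum_u I(M[J]^{(u)};B_u\mid M[J^c]^{(u)})$. The remaining step --- the one I expect to be the main obstacle --- is to show that the polytope cut out by these averaged bounds lies inside $\mathcal{P}$. The key structural input is that $J\mapsto f^{(u)}(J):=I(M[J];B|M[J^c])_{\hat\omega^{(u)}}$ is a normalized, monotone, \emph{submodular} set function: submodularity is exactly strong subadditivity, $S(B\,\bm M_{K_1})+S(B\,\bm M_{K_2})\ge S(B\,\bm M_{K_1\cup K_2})+S(B\,\bm M_{K_1\cap K_2})$, combined with additivity of $S(\bm M_K)$ under a product prior. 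Hence each single-block region $\mathcal{P}(f^{(u)})$ is a polymatroid, and so is the averaged region $\mathcal{P}(\bar f)$; the extreme points of $\mathcal{P}(\bar f)$ are its greedy (permutation) vertices, which are the $\tfrac1n$-averages of the corresponding greedy vertices of the $f^{(u)}$'s, hence convex combinations of points in $\bigcup_u\mathcal{P}(f^{(u)})\subseteq\bigcup_{\{p_{m_k}\}}\mathcal{P}(\cdot)$. Thus $\mathcal{P}(\bar f)\subseteq\mathcal{P}$, and letting $n\to\infty$, $\epsilon,\delta\to 0$ while using closedness of $\mathcal{P}$ completes the converse. Alternatively, since the reduction to the classical-quantum MAC ${\cal W}$ of Eq.~\eqref{eq:cqchannel} is already in place, one may simply invoke Theorem~10 of Ref.~\cite{winter2001capacity} applied to ${\cal W}$, so that the polymatroid/convex-hull bookkeeping is cited rather than reproduced.
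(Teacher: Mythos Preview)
Your proposal is correct and follows the same route as the paper: achievability via Lemma~\ref{lemma:mac_holevo} plus time-sharing, and a weak converse via Fano's inequality, data processing, and chain-rule single-letterization, with the final convex-hull step either cited from Ref.~\cite{winter2001capacity} (as the paper does) or spelled out via the polymatroid/submodularity structure you describe. The only difference is expository---you justify the step ``the averaged single-letter region lies in the convex hull'' explicitly, whereas the paper simply asserts it.
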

\begin{proof}
\QZ{
The achievability results from Lemma \ref{lemma:mac_holevo}, which states that for every $\epsilon, \delta>0$ and all sufficiently large $n$, there exists a
$(n,R_1-\delta,\ldots,R_s-\delta,\epsilon)$ code (see Appendix~\ref{app:method}) achieving $I\left(M[J]; B| M[J^c]\right)_{\hat{\omega}}$. Thus, by time-sharing, Eq.~\eqref{HSW_mutual} can always be achieved. Now we address the proof of weak converse here. Considering the classical coding $F$ in, let $\left\{\{\hat{\sigma}_{{\bm w}_k^n}\}\,|\,1\le k\le s\right\}$, $\{\hat{\Lambda}_{\bm w_1^n\cdots \bm w_s^n}\}$ be any $(n,R_1, \ldots, R_s, \epsilon)$ code. The channel output is in a separable state
\be 
\hat \omega=\otimes_{k=1}^s\frac{1}{2^{nR_k}}\sum_{{\bm w}_k^n}({\cal W}^{\otimes n}\circ F)[{\bm w}_k^n]\otimes \ketbra{{\bm w}_k^n}{{\bm w}_k^n}.
\ee
Its reduced state in the $u$th channel use is denoted as $\hat\omega^{(u)}$.
Denote the error rate of the POVM output $\bm w_k^{\prime n}$ as $\epsilon=P({\bm w}_k^n\neq \bm w_k^{\prime n})$. By Fano's inequality we have 
\be 
S({\bm W}[J]^{n}|{\bm W}^{\prime n})_{\hat\omega}\le 1+\epsilon\, n\sum_{i\in J}R_i\,.
\ee
Note that $|\bm w_k^n|=2^{nR_k}, 1\le k \le s$. We have
\bal 
n\sum_{i\in J}R_i&=S(\bm W[J]^{n})_{\hat\omega}\\
&=I({\bm W}[J]^{n};{\bm W}^{\prime n})_{\hat\omega}+S({\bm W}[J]^{n}|{\bm W}^{\prime n})_{\hat\omega}\\
&\le I({\bm W}[J]^{n};{\bm W}^{\prime n})_{\hat\omega}+1+\epsilon\, n\sum_{i\in J}R_i\,.
\eal 
For reliable communication we require $\epsilon\to 0$, thus
\bal 
\sum_{i\in J}R_i&\le  \frac{I({\bm W}[J]^{n};{\bm W}^{\prime n})_{\hat\omega}}{n}+\frac{1}{n}\\
&\le  \frac{I({\bm M}[J]^{n};{\bm B}^{n}|{\bm M}[J^c]^{n})_{\hat\omega}}{n}+\frac{1}{n}\\
&\le  \frac{\sum_{u=1}^n I(M[J]^{(u)};B^{(u)}|M[J^c]^{(u)})_{\hat\omega^{(u)}}}{n}+\frac{1}{n}\,,
\eal 
which is outer bounded by the convex hull of rates defined by Eq.~\eqref{HSW_mutual}. The second inequality is due to the data processing inequality since we have the completely positive trace-preserving mapping $\bm W^n\to \bm M^n\to \bm B^n\to \bm W^{\prime n}$, and that discarding system never increases quantum mutual information (which follows from strong subadditivity of von Neumann entropy); the third inequality results from the chain rule and the fact that $\hat{\omega}$ conditioned on $\bm M^n$ is a product state over $n$ channel uses. Specifically, choosing the encoding over $n$ blocks to be identical, we have the outer bound of one-encoding rate region defined by
\bal 
\sum_{i\in J}R_i&
&\le I(M[J];B|M[J^c])_{\hat\omega}+\frac{1}{n}\,,
\label{eq:weak converse}
\eal 
}
\end{proof}

\QZ{
\textbf{Remark.} 
For an $\ell$-letter communication scenario (encoding with inputs entangled between $\ell$ channel uses), the systems ${\bm A}^{\ell}$ and ${\bm B}^{\ell}$ can be entangled over the $\ell$ letters within a block. In this case, the mutual information is not necessarily subadditive, Eq.~\eqref{HSW_mutual} needs to be evaluated over $\ell$ channel uses as a whole
\be 
\sum_{i\in J}R_i\leq  I\left({\bm M}[J]^{\ell}; {\bm B}^{\ell}| {\bm M}[J^c]^{\ell}\right)_{\hat{\omega}}\,.
\ee
This has already been pointed out in Sec. VII of Ref.~\cite{winter2001capacity}. 
}

In the EA protocol, the quantum system comes with a reference system $A^\prime$ which is pre-shared to the receiver side intact, while the quantum system goes through the MAC $\calN$. However, we can view the overall channel $\calM= \calN\otimes\calI$ as a single MAC. Note that the encoding is, however, restricted as the $\calI$ part corresponds to references. \QZ{Namely, the encoding mappings are defined by
\be 
\calF_k:m_k\to \calE_{{ m}_k}(\hat{\sigma}^0),\,1\le k\le s\,,
\label{eq:EAencoding}
\ee
where the generator state $\hat{\sigma}^0$ is an arbitrary state of $A_k A_k^{\prime}$, while the encoding operations $\{\calE_{{ m}_k}\}$ is constrained to act only on the signal system $A_k$.}

\section{Proof of Theorem~\ref{theorem: EA_MAC_main} of the main paper}
\label{supp:note3}

We derive the EA-MAC classical capacity in analogy to the single-sender single-receiver EA classical capacity~\cite{bennett2002entanglement} and the two-sender EA-MAC classical capacity~\cite{hsieh2008entanglement}. 

\QZ{Theorem \ref{theorem: EA_MAC_main} in the maintext includes two parts: the regularization Eq.~\eqref{eq:CE_regularization}, and the convex hull Eq.~\eqref{eq:CE_union_state}.
}

\QZ{To begin with, we provide an explanation for the regularization. For channel $\calN$, denote ${\cal X}(\ell)$ as the set of all codebooks with at most $\ell$ entangled letters. The $\ell$-letter one-shot capacity region $\calC_{\rm E}^{(1)}(\calN^{\otimes \ell})$ spans the rate subset achieved by the codebook set ${\cal X}(\ell)$. Thus the union of block codes $\bigcup_{\ell=1}^\infty{\cal X}(\ell)$ includes any possible codebook. Hence the ultimate capacity region $\calC_{\rm E}(\calN)$ is given by the regularized rate region achieved by the universal codebook set $\bigcup_{\ell=1}^\infty{\cal X}(\ell)$. This block-code reduction yields Eq.~\eqref{eq:CE_regularization} from Eq.~\eqref{eq:CE_union_state} in the main text.}

\QZ{Below we provide both the inner and outer bounds of single-letter one-shot region $\calC_{\rm E}^{(1)}(\calN)$, which coincides and thus gives the formula Eq.\eqref{eq:CE_union_state} in the main text. The proof does not specify the mode number per user since the entanglement assistance is unlimited, thus the formula of $\ell$-letter one-shot region $\calC_{\rm E}^{(1)}(\calN^{\otimes \ell})$ follows by inserting $\calN\to\calN^{\otimes \ell}$, $A\to \bm A^{ \ell}$, $B\to \bm B^{ \ell}$, $A'_k$ purifying $\bm A_k^{ \ell}$, $1\le k \le s$. The $\ell$-letter one-shot, one-state region $\tilde\calC_{\rm E}(\calN^{\otimes \ell},\hat{\phi})$ is defined by
\be 
\sum_{k\in J}R_k\leq I(A'[J];\bm B^{ \ell}|A'[J^c])_{\hat{\rho}}, \forall J\,,
\ee
with $\hat{\rho}=\calN^{\otimes \ell}\otimes \calI(\hat{\phi})$, where $\hat{\phi}$ is a pure state defined on $\bm A^{\ell}A'$. } 

In the proof below, we omit ` $\hat{}$ ' in the notation of operators for simplicity, since the meaning of each notation is clear in the context.
\QZ{
Our proof combines the techniques in Ref.~\cite{winter2001capacity} (Lemma~\ref{lemma:mac_holevo} and Theorem~\ref{theorem:mac_hsw}), Ref.~\cite{bennett2002entanglement}, and Ref.~\cite{hsieh2008entanglement}.
}

\subsection{Inner bound}

We prove the inner bound by showing the boundaries achievable for any pure state $\phi$. We show that the $2^s$ conditional quantum mutual information quantities in Eq.~\eqref{HSW_mutual} reach all the boundaries of $\tilde \calC_{\rm E}$ region defined by Eq.~\eqref{eq:CeMAC_qinfo} of the main paper, \QZ{i.e. the rate region defined by Eq.~\eqref{HSW_mutual} is inner bounded by $\tilde \calC_{\rm E}$ regions.} \QZ{Lemma~\ref{lemma:mac_holevo}} guarantees that \QZ{for any $\epsilon,\delta>0$ there is some $(n,R_1-\delta,\ldots,R_s-\delta,\epsilon)$ code achieving the region defined by} Eq.~\eqref{HSW_mutual} \QZ{with some encoding $\{p_{ m_k},\sigma^{m_k}\}, 1\le k\le s$, thus the convex hull over all possible encodings} provides the inner bound \QZ{of $\calC_{\rm E}^{(1)}(\calN)$}. Then the achievability \QZ{of $\text{Conv}\left[\bigcup_{\phi}\tilde\calC_{\rm E}(\calN,\phi)\right]$} is proven. During the evaluation of entropies within this section, $AA'$ \QZ{or $BA'$} is the relevant Hilbert space and we omit the subscripts for states living in $AA'$ \QZ{or $BA'$}.

We begin with the special case of each pair $\{({A}_i,{A^\prime_i}), 1\leq i\leq s\}$ being in a maximally entangled state, such that the reduced state $\phi_{{A}_i}=\Tr_{{A^\prime}{A}_{j\neq i}}(\phi_{})=(I/d_i)_{{A}_i}$ is fully mixed, here the dimension $d_i=\text{dim}{A}_i$. The capacity achieving protocol is implemented by a generator state being $\phi$ and a correlation-removing unitary encoding $\{U_{m}\}$. In terms of the ensemble $\sum_{m} p_{m} U_{m}\phi U_{m}^\dagger$, a correlation-removing encoding $\{U_{m}\}$ wipes out the correlation between each system ${A}_i$ with its reference system. Concretely, this can be implemented by the generalized Pauli operators 
\bal
U_{m_i}&=T_{{A}_i}^{m_{i,T}}R_{{A}_i}^{m_{i,R}}, \,\text{where}\\ 
m_i&=m_{i,T}m_{i,R}, 1\leq m_{i,T}\leq d_i, 1\leq m_{i,R}\leq d_i
\,,\eal
with the Pauli matrices $T_{jk}=\delta_{j,k-1 \text{ mod }d_i}$, $R_{jk}=e^{i2\pi k/d_i}\delta_{jk}$ acting on each subspace ${A}_i$. Governed by the independence constraint of MAC, $\phi_{}=\otimes_{i=1}^s\phi_{{A}_i{A^\prime_i}}$. The encoding $U_{m}=\otimes_{i=1}^s U_{m_{i}}$ acting on $\otimes_{i=1}^s{A}_i$ by probability $p_{m}=\prod_{i=1}^sp_{m_{i}}$ is also separable, \QZ{where $m_i$ is subject to the uniform distribution $p_{m_i}=1/|M_i|$}.

Now the overall quantum state after the channel is
\be 
\omega_{MBA^\prime}=\sum_m p_m \state{m}_M \otimes \left[\calN_{A\to B}  \otimes \calI\right] \left(U_{m}\phi_{}U_{m}^\dagger\right).
\label{eq:omega_app}
\ee 
Applying Lemma~\ref{lemma:mac_holevo} to Eq.~\eqref{eq:omega_app}, \QZ{for any $\epsilon,\delta>0$ there exists a $(n,R_1-\delta,\ldots,R_s-\delta,\epsilon)$ code achieving the vector of data transmission} rate of this protocol 
\begin{widetext}
\bal
\sum_{i\in J}R_i&= I\left(M[J]; B| M[J^c]\right)_{\omega}
\\
&=\mathbb{E}_{{m[J^c]}} \Bigg\{S\left[\calN\otimes\calI \left(\sum_{{m[J]}} \QZ{p_{m[J]}\left(U_{{m[J]}}\otimes U_{{m[J^c]}}\right) \phi_{} \left(U_{{m[J^c]}}^\dagger \otimes U_{{m[J]}}^\dagger\right)}\right)\right]-\mathbb{E}_{{m[J]}} \Big\{S\left[\calN\otimes\calI\left(U_{m}\phi_{}U_{m}^\dagger\right)\right]\Big\}\Bigg\},
\\
&=\mathbb{E}_{{m[J^c]}} \Bigg\{S\left[\calN\otimes\calI\left(U_{{m[J^c]}}\phi_{{{A^\prime}[J^c]}{A}}U_{{m[J^c]}}^\dagger\right)\right]+S\left(\phi_{{A^\prime}[J]}\right) -\mathbb{E}_{{m[J]}} \Big\{S\left[\calN\otimes\calI\left(U_{m}\phi_{}U_{m}^\dagger\right)\right]\Big\}\Bigg\},
\label{eq:chi_cond}
\eal
where $\mathbb{E}_{x}[f(x)]\equiv \int dx p(x) f(x)$ refers to the expectation value of function $f$ averaged on probability distribution $p$. The second equality is because $\{U_{{m[J]}}\}$ is correlation-removing within system $A[J]$,
\bal
\calN\otimes\calI \left(\sum_{{m[J]}} \QZ{p_{m[J]}\left(U_{{m[J]}}\otimes U_{{m[J^c]}}\right) \phi_{} \left(U_{{m[J^c]}}^\dagger \otimes U_{{m[J]}}^\dagger\right)}\right)
=\calN\otimes\calI \left(U_{{m[J^c]}}\phi_{{A^\prime[J^c]}{A}}U_{{m[J^c]}}^\dagger\right)\otimes \phi_{{A^\prime}[J]}
\,.\eal
\end{widetext}
In contrast to the single-user case, here we see that the encoding restricted on ${A}[J]$ fails to remove the correlation between ${A}$ and ${{A^\prime}[J^c]}$, so ${{A^\prime}[J^c]}$ is left with ${A}$ in the entropy quantities. 

Next, we note that for the maximally entangled state $\phi$, we have $U\otimes I \ket{\phi}=I\otimes U^\star\ket{\phi}$ for any unitary $U$.
This equivalence ensures that the encoding preserves the entropy of the output state
\begin{align} 
&S\left[\calN\otimes\calI\left(U_{{m[J^c]}}\phi_{{{A^\prime}[J^c]}{A}}U_{{m[J^c]}}^\dagger\right)\right]
\nonumber
\\
&\qquad =S\left[\calN\otimes\calI\left(\phi_{{{A^\prime}[J^c]}{A}}\right)\right],
\end{align}
similarly 
\be 
S\left[\calN\otimes\calI\left(U_{m}\phi_{}U_{m}^\dagger\right)\right]=S\left[\calN\otimes\calI\left(\phi_{}\right)\right].\ee 
Therefore, the averaging in $\mathbb{E}_{{m[J]}}, \mathbb{E}_{{m[J^c]}}$ disappears in Eq.~\eqref{eq:chi_cond}. 
Therefore, Eq.~\eqref{eq:chi_cond} reduces to
\begin{align}
\sum_{i\in J}R_i=&S\left(\calN\otimes \calI \left({\phi}_{A'[J^c]A}\right)\right)+S(\phi_{{A^\prime}[J]})
\nonumber
\\
&-S\left(\calN\otimes \calI \left(\phi\right)\right).
\label{eq:CeMAC_old} 
\end{align}
Simplifying the right-hand-side above, we obtain
\bal
\sum_{i\in J}R_i&=  S\left(BA'[J^c]\right)_{{\rho}}+S\left(A'[J]\right)_{{\rho}}-S\left(A'[J]BA'[J^c]\right)_{{\rho}}
\\
&=I\left(A'[J];BA'[J^c]\right)_{{\rho}}=I(A'[J];B|A'[J^c])_{{\rho}}.
\label{Eq_Rsum}
\eal
Here we utilized the channel mapping
$ 
\QZ{\rho=}{\rho}_{BA^\prime}=\calN_{A\to B}\otimes \calI ({\phi}_{AA^\prime})
$.
\QZ{The} last step is due to independence between the signals.
Note that Eq.~\eqref{Eq_Rsum} holds for any $J$. Hence Eq.~\eqref{eq:chi_cond} achieves $\tilde \calC_{\rm E}(\calN,{\phi})$ in Eq.~\eqref{eq:CeMAC_qinfo} of the main paper for the maximally entangled state.

Indeed, the achievability also holds for any \emph{projection-like} $\phi_A$, which is proportional to a projection operator to some subspace $\calS$ of the input Hilbert space $\calH_{\rm in}$, along with the encoding $\{U_{{m[J]}}\}$ restricted on $\calS$. 
For now, we have achieved the rate region specified by Eq.~\eqref{eq:CeMAC_qinfo} of the main paper for all projection-like inputs.

Having proven the result for the special case of inputs with a projection-like reduced density matrix, now we extend the proof to general pure product states $\phi$. For simplicity here we denote the reduced density matrix as $\xi\equiv\phi_A$.
{
First we introduce the $n$-\QZ{block} $\epsilon$-typical space $T_n(\xi)$ of an arbitrary single-\QZ{block} state $\xi=\sum_x p(x)\ketbra{x}{x}$, where $\{\ket{x_i}\}$ is an orthogonal set. $T_n(\xi)$ comprises every typical states $\ket{x_1x_2\ldots x_n}$ associated with typical sequences $x_1x_2\ldots x_n$ satisfying
\be
\left|-\log(\prod_{i=1}^n p(x_i))/n-S(\xi)\right|\leq\epsilon
\label{eq:typicality}
\,.\ee 
Denote by $P_{T_n}$ the projection operator onto subspace $T_n(\xi)$. The typicality follows from the law of large number, explicitly, for any $\epsilon>0$ there exists $n$ sufficiently large such that the typical subspace almost includes the $n$-\QZ{block} random encoding $\xi^{\otimes n}$
\be
\Tr \left[\xi^{\otimes n}\left(I-P_{T_n}\right)\right]\leq \delta.
\ee 
}
Now let $\pi_{T_n}$ be the normalized projection operator $P_{T_n}/\dim T_n$. By the definition of $\epsilon$-typicality Eq.~\eqref{eq:typicality}, the entropy of the uniform distribution $\pi_{T_n}$ is almost $nS(\xi)$ up to a prefactor $\sim 1-\delta$, explicitly 
\be 
(1-\delta)2^{n(S(\xi)-\epsilon)}\leq \dim T_n(\xi)\leq 2^{n(S(\xi)+\epsilon)}. 
\label{eq:pi_typicality}
\ee
By a similar correlation-removing encoding on \QZ{$\Phi_{\pi_{T_n}}$}, below we show that the $n$-\QZ{block} $I(\bm A'[J]^n;\bm B^n|\bm A'[J^c]^n)_{(\calN\otimes \calI)^{\otimes n} (\pi_{T_n})}$ quantity achieves $nI(A'[J];B|A'[J^c])_{\calN\otimes \calI (\xi)}$.

\QZ{Following a similar procedure to the proof for maximally entangled state, Eq.~\eqref{eq:CeMAC_old} holds for the $n$-block state}. The typicality Eq.~\eqref{eq:pi_typicality} gives the first term in Eq.~\eqref{eq:CeMAC_old}. Meanwhile, for the last two terms in Eq.~\eqref{eq:CeMAC_old}, we construct a unitary including the environment mode to simulate channel $\calN$. Denote the complementary channel as $\calK$, which maps the input state to the environment mode. Define $\Phi_{\pi_{T_n}}$ as a purification of $\pi_{T_n}$ fulfilled per system pair $A_iA'_i$, $1\le i\le s$. Then the last term $S[(\calN\otimes\calI)^{\otimes n}(\Phi_{\pi_{T_n}})]=S[\calK(\pi_{T_n})]$, thereby the latter two terms converges to the desired quantity when $\epsilon\to 0$
\bal
&S\left[\calN^{\otimes n}\left(\pi_{T_n}\right)\right]-S\left[\calK\left(\pi_{T_n}\right)\right]
\\
&\to n\left[S\left(\calN\left(\xi\right)\right)-S\left(\calN\otimes\calI\left(\Phi_{\xi}\right)\right)\right]
\,,\eal
which follows from Lemma 1 in \cite{bennett2002entanglement}. Both sides of the equation above can be reduced to a conditional quantum information quantity: $I(\bm A'[J]^{ n};\bm B^{ n}|\bm A'[J^c]^{ n})_{{\rho}}$ with $\rho=(\calN\otimes \calI)^{\otimes n}(\Phi_{\pi_{T_n}})$ for LHS, and $nI(A'[J];B|A'[J^c])_{{\rho}}$ with $\rho=\calN\otimes \calI(\phi)$ for RHS. Combining the above together, we have the information rate per letter
\bal
I(\bm A'[J]^{ n};&\bm B^n|\bm A'[J^c]^n)_{(\calN\otimes \calI)^{\otimes n} (\Phi_{\pi_{T_n}})}/n\\
&\to I(A'[J];B|A'[J^c])_{\calN\otimes \calI (\phi)}
\eal
given $\epsilon\to 0$. Note that $\pi_{T_n}$ is projection-like thus $I(\bm A'[J]^{n};\bm B^{n}|\bm A'[J^c]^{n})_{(\calN\otimes \calI)^{\otimes n} (\Phi_{\pi_{T_n}})}$ is \QZ{achieved by the correlation-removing encoding}. Hence, $I(A'[J];B|A'[J^c])_{\calN\otimes \calI (\phi)}$ is achievable for any pure product state $\phi$.  \\

Now we have proven $\tilde \calC_{\rm E}$ specified by Eq.~\eqref{eq:CeMAC_qinfo} of the main paper is achievable for any input $\phi$. \QZ{Then the convex hull of $\tilde \calC_{\rm E}$ is achievable due to time sharing.} Thus the one-shot capacity region is inner bounded by the \QZ{convex hull} of $\tilde \calC_{\rm E}(\calN,\phi_{})$ regions over all possible $\phi_{}$
\be
\calC_{\rm E}^{(1)}(\calN)\supseteq \QZ{\rm Conv}\left[\bigcup_{\phi_{}}\tilde \calC_{\rm E}(\calN,\phi_{})\right]\,.
\ee

\subsection{Outer bound}
\begin{figure}[htbp]
    \centering
    \includegraphics[width=0.25\textwidth]{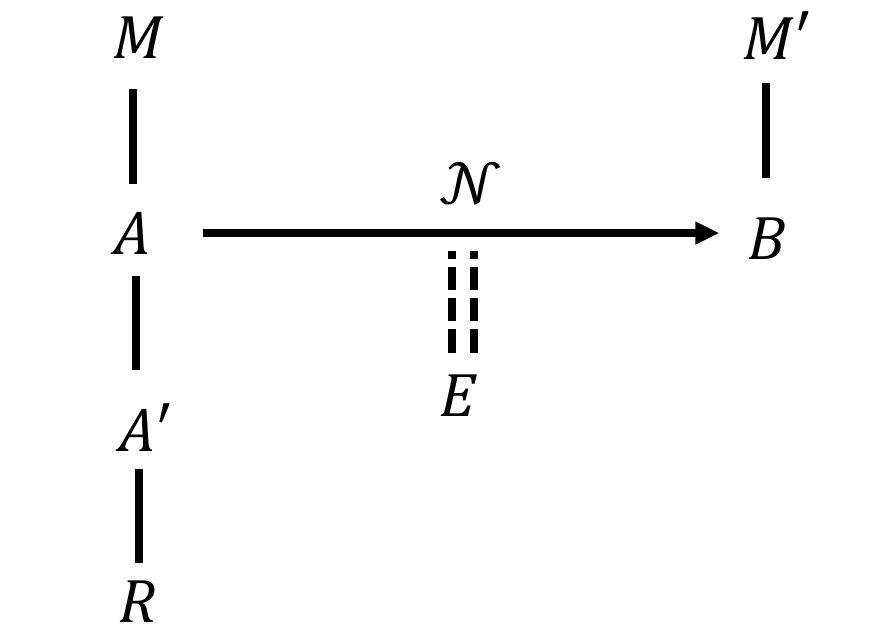}
    \caption{For the proof of the outer bound, $A,B$ are the input and output systems of our MAC $\calN$, $M,M'$ are the input and output code spaces. $A'$ is the entanglement assistance system. System $R$ purifies system $A$ per subsystem $A_i$, $1\le i \le s$.}
    \label{fig:lemma}
\end{figure}

\QZ{Theorem~\ref{theorem:mac_hsw} gives an outer bound with a convex hull over the region defined by $I(M[J];BA'|M[J^c])_{\omega}$ quantities in Eq.~\eqref{HSW_mutual}. We prove that $\tilde C_{\rm E}(\calN,\phi)$ is an outer bound of each region defined by Eq.~\eqref{HSW_mutual} with repetitive uses of one encoding $\{p_m\}$, which is the outer bound of one-shot, one-encoding rates. Then we take the convex hull over $\tilde C_{\rm E}(\calN,\phi)$ and prove it as an outer bound of $\calC_{\rm E}^{(1)}(\calN)$. }

To ease the proof, we introduce reference systems $R, E$, as shown in Fig.~\ref{fig:lemma}. $R=\otimes_{i=1}^s R_i$ individually purifies per sender $i$ the subsystem $\Xi_{M_iA_iA'_i}$ of the overall state $\Xi=\Xi_{MAA'}$ just after encoding, defined by Eq.~\eqref{eq:Xi} of the main paper. The environment system $E$ purifies the overall state,  including the system $R$, just after the channel. The channel is extended to a unitary transform $\calU_\calN:{AE\to BE}$ when $E$ is included. We denote the purification before channel as 
\be 
\xi=\xi_{MAA'RE}\QZ{=\Phi_{\Xi}\otimes \ketbra{0}{0}_E\,,}
\label{eq:xidef}
\ee
and the final purification after channel as $\eta=\eta_{MBA'RE}=\calU_\calN(\xi)$. Below we show that $I(M[J];BA'|M[J^c])_{\omega}$ is upper bounded by the right-hand side of Eq.~\eqref{eq:CeMAC_qinfo} of the main paper.

Equivalently, since the purification systems $R,E$ are traced out at the end and make no difference, we evaluate $I(M[J];BA'|M[J^c])_{\omega}$ over $\eta$
\bal
&I\left(M[J];BA'\big|M[J^c]\right)_{\eta}
=I\left(M[J];BA'M[J^c]\right)_{\eta}
\\
&=I\left(M[J]A'[J];BA'[J^c]M[J^c]\right)_{\eta}
\\
&\qquad-I\left(A'[J];BA'[J^c]M[J^c]\right)_{\eta}+I\left(M[J];A'[J]\right)_{\eta}
\,.\eal
The first equality is due to \QZ{the chain rule of quantum mutual information and the independency} constraint of MAC $I(M[J^c];M[J])=0$. The second equality is due to the chain rule of quantum mutual information. Let $M[J]=X$, $BA'[J^c]M[J^c]=Y$ and $A'[J]=Z$, we have $I(M[J];BA'M[J^c])_{\eta}=I(X;YZ)=I(Y;X|Z)+I(X;Z)=I(XZ;Y)-I(Z;Y)+I(X;Z)$, which equals the second line term by term. Note that the encoding does not affect the entanglement assistance $A'$, thus
\be  
I(M[J];A'[J])_{\eta}=0\,,
\ee
and from the positivity of quantum mutual information
\be 
I(A'[J];BA'[J^c]M[J^c])_{\eta}\geq 0
\,,
\ee 
we have
\be
I(M[J];BA'|M[J^c])_{\eta}\leq I(M[J]A'[J];BA'[J^c]M[J^c])_{\eta}
\,.
\label{App:I2}
\ee
As EA is unlimited, there is always an expanded system $A^{\prime}_{\rm ex}=\otimes_{i=1}^s A^{\prime}_{{\rm ex},i}$ available that includes $A'_i, R_i,M_i$ for each sender $i$. Adopting the expanded EA,
\be
 I(M[J]A'[J];BA'[J^c]M[J^c])_{\eta}\leq  I(A^{\prime}_{\rm ex}[J];BA^{\prime}_{\rm ex}[J^c])_{\eta}
\,,\ee
since, \QZ{as a result of strong subadditivity of the Von Neumann entropy,} discarding systems never increases quantum information.
Note that $I(A^{\prime}_{\rm ex}[J^c];A^{\prime}_{\rm ex}[J])=0$, combining Eq.~\eqref{App:I2} and the equation above, we obtain the outer bound 
\be
\QZ{I(M[J];BA'|M[J^c])_{\eta}}\leq I(B;A^{\prime}_{\rm ex}[J]|A^{\prime}_{\rm ex}[J^c])_{\eta}
\ee
for some pure state $\eta$, which is a product between different senders.
With the expanded entanglement assistance $A^\prime\equiv A^{\prime}_{\rm ex}$, we arrive at the desired outer bound for any encoding $\Xi$
\be 
\QZ{I(M[J];BA'|M[J^c])_{\eta}}\le I(B;A'[J]|A'[J^c])_\rho
\label{eq:EA_outer_bound_app}
\,,\ee
evaluated on the channel output $\rho=\calN\otimes\calI(\phi)$ of some pure product state $\phi=\phi_{AA'}$ that purifies the reduced state $\Xi_{A}$.

Denote $\calT_{\phi}$ as a family of encodings, of which the overall state after the encoding satisfies $\Phi_{\Xi_A}=\phi$. \QZ{Substitute the repetitive $n$ uses of some encoding from $\calT_{\phi}$ for the input of $\{\omega^{(u)}\}_{u=1}^n$ in Ineq.~\eqref{eq:weak converse}, the left hand sides of Ineqs.~\eqref{eq:EA_outer_bound_app} gives an outer bound for information rates of the repetitive encoding. Thus one-shot, one-encoding rates are outer bounded by the region $\tilde \calC_{\rm E}(\calN,\phi)$
\be 
\sum_{i\in J}\tilde R_i\le I(B;A'[J]|A'[J^c])_\rho
\,.\ee
Finally, taking the convex hull on both sides of Ineq.~\eqref{eq:EA_outer_bound_app} and combining it with Theorem~\ref{theorem:mac_hsw}, for any $(n,R_1,\ldots,R_s,\epsilon)$ code we have
\bal 
\sum_{i\in J}R_i&\le \sum_up_uI(M[J];BA'|M[J^c])_{\eta^{(u)}}\\
\eal
for some encoding set $\{p_{\bm m_k^{(u)}}, \sigma^{\bm m_k^{(u)}}\}$ that generates state set $\{\eta^{(u)}\}$ by plugging $\Xi=\sum_{\bm m_k^n}p_{\bm m_k^n}\sigma_{\bm m_k^n}\otimes\ketbra{\bm m_k^n}{\bm m_k^n}$ in Eq.~\eqref{eq:xidef}, and thereby
\bal 
\sum_{i\in J}R_i&\le \sum_up_uI(B;A'[J]|A'[J^c])_{\rho^{(u)}}\\
\eal
for some pure state sets $\{\phi^{(u)}=\Phi_{\Xi_{A^{(u)}}}\}$ that generate state sets $\{\rho^{(u)}\}$ by $\rho^{(u)}=\calN\otimes\calI(\phi^{(u)})$, with $\{p_u\}$ satisfying $\sum_up_u=1$. Thus the convex hull $\calC_{\rm E}^{(1)}(\calN)$ provides an outer bound for the $(n,R_1,\ldots,R_s,\epsilon)$ code rate}
\be
\calC_{\rm E}^{(1)}(\calN)\subseteq 
\QZ{\rm Conv}\left[\bigcup_{{\phi} } \tilde \calC_{\rm E}\left(\calN,{\phi} \right)\right]
\,.\ee

\end{appendix}

\end{document}